\def\draft{1}  
    \def\ITCS{0} 
    \let\original@footnotemark\footnotemark
    \newcommand{\align@footnotemark}{%
      \ifmeasuring@
        \chardef\@tempfn=\value{footnote}%
        \original@footnotemark
        \setcounter{footnote}{\@tempfn}%
      \else
        \iffirstchoice@
          \original@footnotemark
        \fi
      \fi}
    \pretocmd{\start@align}{\let\footnotemark\align@footnotemark}{}{}
    \newcommand{\Rnote}[1]{\begin{framed}\noindent \textcolor{red}{{#1}}\end{framed}} 
    \newcommand{\Rnote}[1]{}
    \newcommand{\remove}[1]{}
    \newtheorem{theorem}{Theorem}
    \newtheorem{example}{Example}
    \newtheorem{lemma}{Lemma}
    \newtheorem{corollary}{Corollary}
    \newtheorem{obs}{Observation}
    \newtheorem{definition}{Definition}
    \newtheorem{proposition}{Proposition}
    \newtheorem{remk}[theorem]{Remark}
    \newenvironment{remark}{\begin{remk} \begin{normalfont}}{\end{normalfont}
    \end{remk}}
    \def\FullBox{\hbox{\vrule width 8pt height 8pt depth 0pt}}
    \def\qed{\ifmmode\qquad\FullBox\else{\unskip\nobreak\hfil
    \penalty50\hskip1em\null\nobreak\hfil\FullBox
    \parfillskip=0pt\finalhyphendemerits=0\endgraf}\fi}
    \def\qedsketch{\ifmmode\Box\else{\unskip\nobreak\hfil
    \penalty50\hskip1em\null\nobreak\hfil$\Box$
    \parfillskip=0pt\finalhyphendemerits=0\endgraf}\fi}
    \newenvironment{proof}{\begin{trivlist} \item {\bf Proof:~~}}
      {\qed\end{trivlist}}
    \newenvironment{proofof}[1]{\begin{trivlist} \item {\bf Proof of #1:~~}}
      {\qed\end{trivlist}}
    \newcommand{\beq}{\begin{equation}}
    \newcommand{\eeq}{\end{equation}}
    \newcommand{\be}{\begin{enumerate}}
    \newcommand{\ee}{\end{enumerate}}
    \newcommand{\bi}{\begin{itemize}}
    \newcommand{\ei}{\end{itemize}}
    \newcommand{\bd}{\begin{description}}
    \newcommand{\ed}{\end{description}}
    \newcommand{\bc}{\begin{center}}
    \newcommand{\ec}{\end{center}}
    \newcommand{\bthm}{\begin{theorem}}
    \newcommand{\ethm}{\end{theorem}}
    \newcommand{\bdefi}{\begin{definition}}
    \newcommand{\edefi}{\end{definition}}
    \newcommand{\bcor}{\begin{corollary}}
    \newcommand{\ecor}{\end{corollary}}
    \newcommand{\blem}{\begin{lemma}}
    \newcommand{\elem}{\end{lemma}}
    \newcommand{\bexa}{\begin{example}}
    \newcommand{\eexa}{\end{example}}
    \newcommand{\bprop}{\begin{proposition}}
    \newcommand{\eprop}{\end{proposition}}
   \newcommand{\ul}{\underline}
    \newcommand{\ronote}[1]{\begin{framed}\noindent \textcolor{red}{{Ronen's note: #1}}\end{framed}} 
    \newcommand{\ranote}[1]{\begin{framed}\noindent \textcolor{red}{{Rann's note: #1}}\end{framed}} 
    \newcommand{\ranote}[1]{}
    \newcommand{\ronote}[1]{}
    \def\real{\hbox{\rm\setbox1=\hbox{I}\copy1\kern-.45\wd1 R}}
    \def\neal{\hbox{\rm\setbox1=\hbox{I}\copy1\kern-.45\wd1 N}}
    \newcommand{\abs}[1]{\left|{#1}\right|}
    \newcommand{\eps}{\varepsilon}
    \newcommand{\scM}{\mathcal{M}}
    \newcommand{\R}{{\mathbb R}}
\begin{document}
    
    \definecolor{myblue}{RGB}{80,80,160}
    \definecolor{mygreen}{RGB}{80,160,80}
    
    \title{Selling Data to a Competitor}
 \ifnum\ITCS=1
 \author{}
 \else   
    \author{Ronen Gradwohl\thanks{Department of Economics and Business Administration, Ariel University. Email: \texttt{roneng@ariel.ac.il}.
    }
    \and
    Moshe Tennenholtz\thanks{Faculty of Data and Decision Sciences, The Technion -- Israel Institute of
    Technology. Email: \texttt{moshet@ie.technion.ac.il.} The work by Moshe Tennenholtz was supported by funding from the European
Research Council (ERC) under the European Union's Horizon 2020
research and innovation programme (grant number 740435).}}
    \fi
    \date{}
    
	\maketitle

       \begin{abstract}
       
We study the costs and benefits of selling data to a competitor. Although selling all consumers' data may decrease total firm profits, there exist other selling mechanisms---in which only some consumers' data is sold---that render both firms better off. We identify the profit-maximizing mechanism, and show that the benefit to firms comes at a cost to consumers. We then construct Pareto-improving mechanisms, in which each consumers' welfare, as well as both firms' profits, increase. Finally, we show that consumer opt-in can serve as an instrument to induce firms to choose a Pareto-improving mechanism over a profit-maximizing one.

       \end{abstract}

%
%
    
    \renewcommand{\thefootnote}{\arabic{footnote}}
    \setcounter{footnote}{0}

\section{Introduction}

In recent years, it has become common wisdom that data is a dominant source of power. This power is perhaps most clearly illustrated in markets where an incumbent with access to consumer data competes with an entrant who does not have such data. As stated in a crisp manner by \citeauthor{macmillan2008incumbent} in their \citeyear{macmillan2008incumbent} Harvard Business Review article, common wisdom holds that  the incumbent's key advantage is data superiority: ``If you run a market-leading company, you should never be blindsided by an invader. Locked within your own records is a huge, largely untapped asset that no attacker can hope to match: what we call the incumbent's advantage.'' The situation is not uncommon: In our data-driven economy, competing firms often find themselves in asymmetric situations where one of them has superior or even exclusive access to relevant data.

Such data asymmetry has  become a major issue for debate. For example, in a June 2021 press release, the EU declared that it has opened an antitrust investigation that will ``examine whether Google is distorting competition by restricting access by third parties to user data for advertising purposes on websites and apps, while reserving such data for its own use,'' \citep{EU2021}. 
One of the issues in such debates is the question of data sharing: Should the incumbent share its data in order to increase market competition and consumer welfare? And can the incumbent strategically sell its data to the entrant in a profitable way?

Strategic decisions about the sale of data to competitors appear in the online economy frequently, although they are not always stated explicitly. For example, when an advertiser buys a sponsored-search campaign through an ad exchange, the advertiser obtains useful information about a segment of consumers as part of the ad exchange reports. The advertiser might then use this information when bidding directly for display ads on other platforms, including platforms on which the ad exchange is also a competitor. The data-holder (e.g., ad exchange) thus faces a strategic decision about which consumer segments to sell to a competitor, and at what prices. The data-buyer (e.g., advertiser), in turn, must decide whether to pay the price and obtain data about these consumer segments, or whether to enter into competition without the data on offer.

These strategic considerations raise numerous questions: What are the data-holder's costs  and benefits from selling data to a data-buyer? What are the effects of data sale on consumer welfare? And, in a regulated market, can data sale be regulated in a way that leads to Pareto-improving transactions, benefitting consumers as well as firms?
   
In this paper we study these questions in a paradigmatic model of imperfect competition between two firms who have asymmetric access to data.
We consider the classic \citet{hotelling1929stability} model of imperfect competition: There are two firms, each located at a different endpoint of a unit interval, with a unit mass of consumers distributed across this interval. We model data about a consumer as information about the consumer's location on the interval. In the classic model, neither firm has data about any consumer, and so firms engage in competition via uniform prices that each offers to all consumers. In our variation of this model, in contrast, one firm is a data-holder who knows the locations of all consumers, whereas the other firm is a data-buyer who has no such data. The data-holder can use its data advantage in order to personalize prices to consumers, and can thus sometimes undercut the data-buyer's uniform price.

In order to study the costs and benefits of data sale to a competitor, we suppose the firms engage in a data-sharing mechanism. Such a mechanism consists of a segment of consumers whose data the data-holder shares with the data-buyer, as well as a price the data-buyer pays the data-holder. After engaging in such a mechanism, the data-buyer will hold location data about all consumers in the shared segment, allowing that firm to also personalize prices to them.

Within this model, we first show that full data-sharing, in which the data-holder shares all its data with the data-buyer, is harmful to the firms. We then show that there exist other data-sharing mechanisms---in which only some consumers' data is shared---that increase both firms' profits. In fact, we identify the mechanism that maximizes total firm profits. This last mechanism, however, increases firm profits at the expense of consumers. We thus proceed to show that there exist Pareto-improving mechanisms, in which each consumers' welfare, as well as both firms' profits, increase. Finally, we consider the question of how a regulator can induce firms to utilize a Pareto-improving mechanism rather than a profit-maximizing one that may harm consumers. We show that consumer opt-in may serve as such an instrument: If consumers are given the opportunity to opt-in to having their data sold, and if the data-holder is only permitted to share data about consumers who have opted-in, then in equilibrium the firms will choose a Pareto-improving mechanism.
    
Our results are driven by two forces, which we identify as the direct effect and the indirect effect of data sharing. The direct effect is the following: if the data-holder shares data about a particular consumer, then the data-buyer can now offer that consumer a personalized rather than the uniform price. This affects both firms' equilibrium personalized prices to that consumer, and may thus impact profits and welfare. The indirect effect of data sharing, on the other hand, is the following: by sharing data about a segment of consumers, the data-holder changes the set of consumers to whom the data-buyer's uniform price applies (since additional consumers will now be offered personalized
prices). And since the uniform price is determined in equilibrium in part by the locations of consumers to whom that price will
apply, a change in the set of consumers may effect a change in the equilibrium uniform price, thereby affecting profits and welfare. Our results highlight how the interplay between the direct and indirect effects of data sharing lead to changes in firms' profits and consumers' welfare.

In addition to identifying the two effects of data sharing, our analysis generates several general insights. First, and perhaps surprisingly, selling data to a competitor can be strictly beneficial to both firms.\footnote{We note that this holds even if data is sold at no cost---see Proposition~\ref{prop:firm-optimal}.} Second, data can be sold in a way that is Pareto improving. And finally, such Pareto-improving data-sale can be induced by consumer opt-in regulation.

We note that the idea of selling data to a competitor has been advocated in financial markets \citep[for example, see][]{admati1988selling}. In that context, the possibility of data sale allows a decision maker to choose between taking investment risks or obtaining direct monetary rewards. The incredibly fast-growing data-economy, in which some firms hold massive amounts of data about consumers, raises calls to consider such data sale in a broader context: Can it lead to increased profits to both data-holders and data-buyers? And can it benefit all of society, including consumers whose data is exchanged?

\paragraph{Organization of the paper} Immediately following is a review of the related literature, after which we formally describe the model. The subsequent sections then contain our various analyses: Section~\ref{sec:no-sharing} on the baseline case of no data-sharing, Section~\ref{sec:full-sharing} on full data-sharing, Section~\ref{sec:firm-optimal} on firm-optimal data-sharing, and Section~\ref{sec:pareto-improving} on Pareto-improving data-sharing. Finally, in Section~\ref{sec:opt-in} we present our results on consumer opt-in, after which we conclude.

\paragraph{Related literature}

This paper is part of a large and growing literature on data markets \citep[see, e.g., the survey of][]{bergemann2019markets}. Work in this area focuses on related but orthogonal questions, such as the effects of data-sale by a third-party data-provider and of information sharing between competitors. Our paper bridges these strands by considering data sale to a competitor. To the best of our knowledge, the work of \citet{admati1988selling} is the only other paper to study such a scenario, and ours is the first to focus on the effects of such data sale on firm profits as well as consumer welfare.

The literature on the sale of data by a data provider \citep[which includes, among others, the papers of][]{admati1986monopolistic,bergemann2018design,montes2019value,segura2021selling,yang2022selling} studies how a third-party data-provider can maximize profits by selling data to a monopolist or to competing firms who use this data to price discriminate. 
Within this literature, one paper that is closely related to ours is that of \citet{elliott2021market}. \citeauthor{elliott2021market}  consider an information designer who provides consumer information to oligopolists, and characterize the different market outcomes that can be achieved by the designer. Our paper differs from this research in that we suppose data is not held by a third-party, but rather by one of the competing firms. This firm may sell data to its direct competitor, thereby affecting both firms'  respective market positions. 

A different but related setting is that of \cite{ali2020voluntary}, where the consumers are holders of information who may share it with one or both firms so as to intensify competition.  The model and results of \cite{ali2020voluntary}  bear some similarity to ours. For example, they also consider a Hotelling model, and show that consumers are better off whenever those sufficiently closer to one firm than another share their location with that firm, and those closer to the middle share their location with both firms. Despite the similarities, our paper studies an orthogonal question, as we assume one of the firms already has data about consumers, and focus on whether that firm will sell data to its competitor. In contrast to the model of \cite{ali2020voluntary}, in which each consumer chooses which firm has access to that consumer's location, in our model the informed firm strategically chooses whether or not to share this information. Under consumer opt-in the role of consumers is in determining whether such sharing could potentially take place, but not in whether it actually takes place. Finally, while \cite{ali2020voluntary} show that, under competition, consumers are always better off when they share some of their information, we show that when firms choose what information to share this may no longer be the case.

Because our work considers the sale of data from one firm to another, it is related to the literature on information sharing.
Although information sharing between firms has been studied in a variety of settings,\footnote{These include oligopolistic competition \citep{clarke1983collusion,raith1996general}, financial intermediation \citep{pagano1993information,jappelli2002information,gehrig2007information}, supply chain management \citep{ha2008contracting,shamir2016public},
competition between data brokers \citep{gu2019data,ichihashi2020competing}, and advertising \citep{gradwohl2022coopetition}.} our paper is most-closely related to that of competitive price discrimination---see, for example, the surveys of \cite{stole2007price} and \cite{fudenberg2012digital}. One of the main insights from this literature is that when firms have more data about consumers, competition between them is more intense, leading to lower prices and hence lower profits. 

In our paper, in contrast, data is sold by one firm to another strategically in such a way as to increase profits.

Two papers that specifically analyze the effects of data sharing within a Hotelling model are \cite{jentzsch2013targeted} and \cite{clavora2021effects}.
\cite{jentzsch2013targeted} study a model in which each of two firms may have data both about consumers' locations and about their transportation costs, and consider the eight permutations in which each firm may have either a dataset about locations, a dataset about transportation costs, both datasets, or neither datasets. They then analyze the market effects of firms sharing one or both of their (full) datasets with each other, and provide conditions under which sharing is beneficial to the firms. 
\cite{clavora2021effects} studies a Hotelling model in which locations are two-dimensional, and firms hold all data about one dimension, both dimensions, or neither dimension. He analyzes the various scenarios in terms of firm profits and consumer welfare, with a particular emphasis on the comparison to the regimes of full privacy (neither firm has any data) and no privacy (both firms have full data). Interestingly, \citeauthor{clavora2021effects} shows that total firm profits are hump-shaped in the amount of information they hold; for example, the scenario in which each firm holds data about a different dimension yields higher profits than both full privacy and no privacy. 
Our work differs from both of these papers in that we study the sale of partial data from one firm to another, with an emphasis on mutually increasing profits.

In terms of modeling, our paper is most closely related to those of \cite{montes2019value} and of \cite{gradwohl2022pareto}.  \cite{montes2019value} consider a one-dimensional Hotelling model in which consumers' locations may be known to one, both, or neither firm. Their concern is not the sale of data from one firm to another, but rather the optimal strategy of a data broker who sells the data to the firms. They also consider the effects of a consumer-side technology that allows consumers the ability to protect their privacy. \cite{gradwohl2022pareto} also study a Hotelling model, but suppose that both firms have some data about consumers. Their main focus is on various forms of mutual data sharing between the firms.

\section{The Model}\label{sec:model}

We focus on a standard Hotelling model, in which a unit mass of consumers is spread over the unit interval according to an atomless distribution $F$ with continuous, strictly positive density $f$ that has full support.
There are two firms: firm $A$ is located at $\theta_A=0$, and
firm $B$ is located at $\theta_B=1$. Each consumer chooses at most one firm from which to purchase a good.
Consumers derive value $v$ from the good, but pay two costs: the price, and a linear transportation cost that scales
with the distance between the consumer and the firm providing the good.
Thus, a consumer located at $\theta$ who buys from firm $i$ at price $p_i$ obtains utility $v-p_i-t\abs{\theta-\theta_i}$, where
$t$ is the marginal transportation cost. 
We assume throughout that the market is covered---namely, that $v> 2t$---so that all consumers
purchase a good even when there is a monopolist firm. Finally, we also assume for simplicity that firms' marginal costs are 0, and
so their profit from the sale of a good is equal to the price. These are all standard assumptions in Hotelling games.

The standard setup consists of a two-stage game: First, firms simultaneously set prices; second, consumers choose a firm
and make a purchase. In the simple case where the distribution $F$ of consumers is uniform the game has a unique subgame perfect equilibrium: firms' prices are $p_A=p_B=t$, consumers
in $[0,0.5)$ buy from $A$, and consumers in $(0.5,1]$ buy from $B$ \citep[see, e.g.,][]{belleflamme2015industrial}.\footnote{The equilibrium is unique up to the choice of the indifferent consumer located at $\theta=0.5$.}

In this paper we will consider a variant of the standard model by supposing that firms may have additional information about
some of the consumers. In particular, we will suppose that, for each consumer, one or both firms know the
location of that consumer on the unit interval. For such consumers, firms will be able to offer a {\em personalized price}---a 
special offer specifically tailored to that consumer. If a firm does not know a consumer's location,
however, then it cannot distinguish between that consumer and all other consumers whose location it does not know. All such consumers are offered the same {\em uniform price} by the firm.

In our model, firm $B$ is the data-holder and firm $A$ is the data-buyer. Thus, initially, we assume that firm $B$ knows the locations of all consumers, whereas firm $A$ does not know any consumer's location. 
Given this informational environment, a data-sharing mechanism $M=(M_B,r)$ between firms specifies a subset $M_B\subseteq[0,1]$ and a number $r\in\R$, with the interpretation that firm $B$ shares with
firm $A$ the locations of consumers in $M_B$, and firm $A$ transfers to firm $B$ a payment $r$.
Two simple examples of data-sharing mechanisms are one that involves {\em no sharing}, $M=(\emptyset,r)$, and one that involves {\em full sharing}, $M=([0,1],r)$.
Alternatively, firm $B$ may share data about a subset of  consumers.  For example, under mechanism $([x,y],r)$, if consumer $\theta\in [x,y]$ arrives, both firms will know that consumer's location. On the other hand, 
if consumer $\theta\in [0,1]\setminus [x,y]$ arrives, firm $B$ will know that consumer's location, and firm $A$ will only be able to deduce that the consumer is not located within $[x,y]$.

In our analysis, we consider the following order of events:
\begin{enumerate}
\item Firms engage in a data-sharing mechanism $M=(M_B,r)$.
\item Firm $A$ announces uniform price $p_A$.\footnote{Note that firm $B$ knows all consumers' locations, and so personalizes prices to each. It therefore need not post a uniform price.}
\item A consumer arrives, and all firms who know the consumer's location $\theta$ simultaneously offer 
that consumer a personalized price, $p_A(\theta)$ and $p_B(\theta)$.
\item The consumer chooses a firm from which to buy, and payoffs are realized.
\end{enumerate}

Note that firms share data, and firm $A$ announces its uniform prices, before consumers arrive. 
After a consumer arrives to the market, the firms who know
the consumer's specific location simultaneously offer personalized prices. 
If firm $A$ offers a consumer a personalized price, this offer subsumes the firm's original uniform price.
Thus, the uniform price $p_A$ will apply only to those consumers who will not subsequently be offered a personalized price
by firm $A$.

Importantly, when firms set personalized prices, they know
the uniform price set by firm $A$ in the previous stage. 
This is the standard timing considered in the literature \citep[see, e.g.,][]{thisse1988strategic,choudhary2005personalized,choe2018pricing,montes2019value,chen2020competitive}.\footnote{An alternative model that we do not analyze
is one in which firms set uniform and personalized prices simultaneously, for each consumer. \citet{montes2019value} show that, in this case, a (pure) equilibrium may fail to exist.}

For any fixed mechanism $M$, we will consider the pure subgame perfect equilibria of the game that starts with data-sharing mechanism $M$. Such equilibria always exists, and consist of a uniform price for firm $A$ followed by personalized prices for both firms. Once the uniform price is fixed, the equilibrium personalized prices for each consumer $\theta$ are uniquely fixed. We will be interested in designing mechanisms $M$ that lead to equilibria with high firm-profits and high consumer-welfare.

One important desideratum of data-sharing mechanisms (with corresponding equilibria) is that they be {\em individually rational (IR)}: That the expected utility of each firm with data sharing be at least as high as without data sharing. A data-sharing mechanism should be IR if we expect firms to participate.

Our main focus will be on mechanisms that are not only IR, but also {\em Pareto-improving}: that when data sharing takes place, (i) the expected utility of each firm and {\em every} consumer be at least as high as without data sharing, and that (ii) either firm $A$'s profits, firm $B$'s profits, or total consumer welfare be strictly higher.

We note that many of our results make no assumptions about the distribution of consumers. In such a general setting there may be multiple equilibria, even with no data-sharing, each with different uniform prices. Hence, we will often describe mechanisms as being IR or Pareto-improving {\em relative to} a particular no-sharing equilibrium.

\section{No Data-Sharing}\label{sec:no-sharing}
We begin by analyzing equilibria under no data-sharing. To this end,
define
$\mu(p_A) = \frac{1}{2}-\frac{p_A}{2t}$.
We have the following proposition:
\begin{proposition}\label{prop:no-sharing}
Let $P_A = \arg\max_{p}p\cdot F\left(\mu(p_A)\right)$.
Without data sharing, the set of equilibria consist of any uniform price $p_A\in P_A$ for firm $A$ and corresponding personalized prices  $p_B(\theta)=\max\{0,  p_A+t(2\theta-1)\}$ for firm $B$. In the equilibrium with uniform price $p_A\in P_A$, consumers in $[0, \mu(p_A))$ purchase from firm $A$, whereas consumers in $[\mu(p_A), 1]$ purchase from $B$. The equilibrium with $p_A=\max \{P_A\}$ is strictly dominant for the firms.
\end{proposition}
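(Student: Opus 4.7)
My plan is to work by backward induction through the three pricing-related stages: first pin down firm $B$'s personalized prices as a function of $p_A$, then derive firm $A$'s optimal uniform price, and finally compare the resulting equilibria to establish strict dominance of the one with $p_A=\max P_A$. Throughout, I would rely on the market-coverage assumption $v>2t$ so that every consumer makes a purchase, and on the atomlessness of $F$ to treat the indifferent consumer as a measure-zero detail.

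Fix a uniform price $p_A$ and a consumer location $\theta$. Since firm $A$ is already committed, firm $B$ faces an effectively single-agent problem for each consumer: the consumer buys from $B$ if and only if $p_B(\theta)+t(1-\theta)\leq p_A+t\theta$, i.e., $p_B(\theta)\leq p_A+t(2\theta-1)$. Firm $B$ thus sets $p_B(\theta)$ as high as possible while still winning (or sets it to $0$ if it cannot profitably win), yielding the stated formula $p_B(\theta)=\max\{0,p_A+t(2\theta-1)\}$, with the standard tie-breaking convention that the indifferent consumer buys from $B$ (innocuous since $F$ is atomless). Unpacking this, for $\theta<\mu(p_A)$ the expression $p_A+t(2\theta-1)$ is negative so $B$ cannot profitably undercut and the consumer goes to $A$; for $\theta\geq\mu(p_A)$ the consumer goes to $B$ at price $p_A+t(2\theta-1)\geq 0$. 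This exactly reproduces the allocation in the statement.

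Moving back one stage, the mass of consumers purchasing from $A$ is the measure of $[0,\mu(p_A))$, which equals $F(\mu(p_A))$ by continuity of $F$. Hence firm $A$'s profit when posting $p_A$ is $p_A\cdot F(\mu(p_A))$, and its set of best responses is exactly $P_A=\arg\max_p p\cdot F(\mu(p))$. This argmax is attained because the profit function is continuous in $p$ and vanishes outside the compact interval $[0,t]$ (since $\mu(p)\leq 0$ when $p\geq t$), so $P_A$ is non-empty. Combined with the subgame analysis, this characterizes the full set of subgame-perfect equilibria.

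For the strict-dominance claim, I would note first that firm $A$'s profit is constant across equilibria with $p_A\in P_A$ by definition of the argmax, so firm $A$ is indifferent. Firm $B$'s profit in such an equilibrium is
\[
\int_{\mu(p_A)}^{1}\bigl[p_A+t(2\theta-1)\bigr]\,f(\theta)\,d\theta,
\]
which is strictly increasing in $p_A$: the integrand rises with $p_A$, and the region of integration $[\mu(p_A),1]$ enlarges as $p_A$ grows (since $\mu$ is strictly decreasing). Therefore $p_A=\max P_A$ is weakly preferred by firm $A$ and strictly preferred by firm $B$ over every other equilibrium, establishing the dominance claim. The only delicate point along the way is handling the boundary consumer at $\theta=\mu(p_A)$, but atomlessness of $F$ ensures this affects neither the profit integrals nor the argmax computation; the rest is routine.
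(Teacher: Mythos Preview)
Your proposal is correct and follows essentially the same backward-induction approach as the paper: derive $B$'s personalized best response given $p_A$, identify the resulting cutoff $\mu(p_A)$, optimize $A$'s uniform price to get $P_A$, and then argue that $B$'s profit is strictly increasing in $p_A$ while $A$ is indifferent across $P_A$. The only differences are cosmetic---you add the compactness argument for non-emptiness of $P_A$ and express $B$'s profit as an integral, whereas the paper argues the monotonicity verbally (``more consumers at higher prices'')---but the logic is identical.
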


\begin{proof}
Given a uniform price $p_A$ of firm $A$, firm $B$ personalizes a price to each consumer, if possible making the latter indifferent between buying from $A$ and from 
$B$.\footnote{Assume that if a consumer is indifferent, he purchases from the firm offering a personalized price. If both offer personalized prices, he purchases from the closer firm.} 
Thus, given $p_A$, firm $B$ charges personalized price
$p_B(\theta) = \max\{0,  p_A+t(2\theta-1)\}$ to a consumer located at $\theta$. Observe that, at these prices,
consumers in $[0, \mu(p_A))$ purchase from firm $A$, whereas consumers in $[\mu(p_A), 1]$ purchase from $B$.
Given this, firm $A$ maximizes its profit $p_A \cdot F\left(\mu(p_A)\right)$ by choosing
$p_A\in P_A$.
If $P_A$ is not a singleton, then firm $A$ is indifferent between a variety of uniform prices. Firm $B$, however, is not indifferent, and in fact strictly prefers higher uniform prices of firm $A$. To see this, observe that $\mu(p_A)$ is decreasing and $p_B(\theta)=\max\{0,  p_A+t(2\theta-1)\}$  increasing in $p_A$. Thus, for higher $p_A$, firm $B$ sells to more consumers, and does so at higher prices. This implies that the equilibrium with $p_A=\max\{P_A\}$ is strictly dominant for the firms.
\end{proof}

Throughout the paper we will illustrate our results with the simple case in which consumers are uniformly distributed on $[0,1]$. We note that this is the standard setup in Hotelling games.

\begin{example}\label{ex:uniform-no-sharing}
When consumers are uniformly distributed on $[0,1]$, the set $P_A=\{t/2\}$. In the unique equilibrium, then, consumers between 0 and $\mu(t/2)=1/4$ purchase from $A$ at uniform price $p_A=t/2$, whereas the rest purchase from $B$ at personalized prices $p_B(\theta)=\max\{t(2\theta-1/2),0\}$. Total firm profits are $\pi_A=t/8$ and 
$$\pi_B = \int_{1/4}^1 t(2\theta-1/2)d\theta = \frac{9t}{16},$$
whereas consumer welfare is
$$CW = \int_0^1 \max\{v-\theta t - p_A, v-t(1-\theta) - p_B(\theta)\} d\theta 
=  \int_0^1 (v - t/2 - \theta t)  d\theta = v-t.$$
\end{example}

How does data sharing between the firms impact profits and welfare? This is the question we now proceed to answer.

\section{The Direct Effect and Full Data-Sharing}\label{sec:full-sharing}
In this section we begin our analysis of how data-sharing impacts profits and welfare. Data sharing has a direct effect and an indirect effect. The direct effect is that if firm $A$
obtains information about a consumer's locations via the sharing mechanism, it can now offer that consumer a personalized price. This affects firm $B$'s equilibrium personalized price to that consumer, and hence also profits and welfare. The indirect effect of data sharing is that it may change
the set of consumers to whom firm $A$'s uniform price applies, since additional consumers will now be offered personalized
prices. And since the uniform price is determined in equilibrium in part by the locations of consumers to whom that price will
apply, a change in the set of consumers may effect a change in the equilibrium uniform price. 
In this section we explore the direct effect, and then in Section~\ref{sec:firm-optimal} we explore the indirect effect.

Suppose that, absent data-sharing, firm $A$'s uniform price is $p_A$.
If firm $B$ shares the location $\theta$ of some consumer with $A$, then the firms compete in personalized prices over that consumer, yielding equilibrium prices $p_A(\theta) = \max\{t(1-2\theta),0\}$ and $p_B(\theta)=\max\{t(2\theta-1),0\}$.
The direct effect of firm $B$ sharing the location of a consumer depends on the consumer's location $\theta$, and is summarized in Lemma~\ref{lem:direct-effects}:
\begin{lemma}\label{lem:direct-effects}
Consider mechanism $M=(\{\theta\},0)$ relative to no sharing, and suppose that consumer $\theta$ shows up.
 \begin{enumerate}
\item If $\theta\in(1/2, 1]$, consumer $\theta$ still buys from $B$, but now at price $t(2\theta-1)$. This is a net loss of $p_A$ to firm $B$ and a net gain of $p_A$ to the consumer.
\item If $\theta\in[\mu(p_A),1/2)$,  consumer $\theta$ switches to purchasing from $A$, at price $t(1-2\theta)$.
This is a loss of $p_A+t(2\theta-1)$ to firm $B$, a gain of $t(1-2\theta)$ to firm $A$, and a gain of 
$p_A-t(1-2\theta)\geq 0$
to the consumer. Also, the gain to $A$ is greater than the loss to $B$ if and only if 
$$\theta<\frac{1}{2}\left(\mu(p_A)+\frac{1}{2}\right),$$
the midpoint of the interval of $\theta$-s in the case under consideration.
\item If $\theta\in[0,\mu(p_A))$, consumer $\theta$ still buys from $A$, but now at personalized price $p_A(\theta)=t(1-2\theta)>p_A$.
\end{enumerate}
\end{lemma}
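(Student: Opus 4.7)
The plan is to compare the equilibrium outcome for consumer $\theta$ under $M=(\emptyset,0)$ versus $M=(\{\theta\},0)$, keeping firm $A$'s uniform price $p_A$ fixed. The first observation is that since $\{\theta\}$ has $F$-measure zero, firm $A$'s objective $p_A\cdot F(\mu(p_A))$ from Proposition~\ref{prop:no-sharing} is unchanged by adding $\theta$ to $M_B$, so the same uniform price $p_A$ remains optimal. This isolates the direct effect, and reduces the lemma to a per-consumer calculation.

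Next I would derive the personalized prices after sharing. Once both firms know $\theta$, standard Bertrand-style undercutting (exactly as in the proof of Proposition~\ref{prop:no-sharing}) forces the farther firm to post price $0$ and the closer firm to charge the price that makes the consumer indifferent between the two offers. For $\theta<1/2$ this yields $p_A(\theta)=t(1-2\theta)$ and $p_B(\theta)=0$ with the consumer buying from $A$; for $\theta>1/2$ it yields $p_B(\theta)=t(2\theta-1)$ and $p_A(\theta)=0$ with the consumer buying from $B$.

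Now I would handle the three cases by a straightforward bookkeeping argument, comparing pre- and post-sharing prices against the no-sharing equilibrium prices from Proposition~\ref{prop:no-sharing}. For Case~1, the consumer stayed with $B$; $B$'s price falls from $p_A+t(2\theta-1)$ to $t(2\theta-1)$, a transfer of exactly $p_A$ from $B$ to the consumer. For Case~3, the consumer stays with $A$; the personalized price $t(1-2\theta)>p_A$ (since $\theta<\mu(p_A)$ is equivalent to this inequality by definition of $\mu$), so $A$ gains and the consumer loses, while $B$ is unaffected. For Case~2, the consumer switches from $B$ to $A$; the utility change for the consumer is the difference between the old total (price plus transport) paid to $B$ and the new total paid to $A$, which simplifies to $p_A-t(1-2\theta)$, nonnegative precisely because $\theta\ge\mu(p_A)$. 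Firm $B$ loses its prior revenue $p_A+t(2\theta-1)$, and firm $A$ gains $t(1-2\theta)$.

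The only nonroutine step is the midpoint comparison at the end of Case~2. I would solve the inequality $t(1-2\theta)>p_A+t(2\theta-1)$ directly, getting $\theta<\tfrac12-\tfrac{p_A}{4t}$, and then verify that $\tfrac12-\tfrac{p_A}{4t}=\tfrac12(\mu(p_A)+\tfrac12)$ using $\mu(p_A)=\tfrac12-\tfrac{p_A}{2t}$. The main obstacle is really just keeping sign conventions straight (so that $t(2\theta-1)$ is treated as a signed quantity that is negative in Case~2), and making sure the consumer-utility accounting subtracts transportation costs consistently across the old and new purchasing firms.
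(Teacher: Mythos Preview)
Your proposal is correct and follows essentially the same approach as the paper: derive the post-sharing personalized prices via Bertrand undercutting, then do the case-by-case bookkeeping against the no-sharing benchmark from Proposition~\ref{prop:no-sharing}, with the midpoint inequality handled exactly as you outline. The one addition you make---the measure-zero argument for why $p_A$ remains unchanged---is not in the paper's proof (which simply takes $p_A$ as given), but it is a sensible clarification and does no harm.
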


\begin{proof}
We prove each bullet in order:
\begin{enumerate}
\item If $\theta\in(1/2, 1]$, then, by Proposition~\ref{sec:no-sharing}, under no sharing consumer $\theta$ buys from $B$ at price $p_A+t(2\theta-1)$. Under $M$, both firms personalize prices to $\theta$, but because $\theta$ is closer to $B$ that firm will be able to charge a lower price. The maximal price firm $B$ can charge and still sell to $\theta$ is $p_B(\theta)=t(2\theta-1)$, since firm $A$'s personalized price to $\theta$ is 0.

\item 
If $\theta\in[\mu(p_A),1/2)$ and $p_A(\theta)=t(1-2\theta)$, consumer $\theta$ will prefer to purchase from $A$ rather than from $B$ even when $p_B(\theta)=0$.
The consumer's gain is $$p_A+t(2\theta-1)+t(1-\theta) - t(1-2\theta) - t\theta = p_A-t(1-2\theta)\geq 0,$$
where the inequality follows from $\theta\geq\mu(p_A)$.
The second inequality follows since
$$\theta<\frac{1}{2}-\frac{p_A}{4t} = \frac{\mu(p_A)}{2}+\frac{1}{4}=\frac{1}{2}\left(\mu(p_A)+\frac{1}{2}\right).$$

\item If  $\theta\in[0,\mu(p_A))$ and $p_A(\theta)=t(1-2\theta)$, consumer $\theta$ will still prefer to purchase from $A$. The inequality $p_A(\theta)=t(1-2\theta)>p_A$ follows from $\theta<\mu(p_A)$.
\end{enumerate}
\end{proof}

Given these direct effects, we now consider full data-sharing, namely, $M=([0,1],r)$ for some $r$.
Under this mechanism, both firms know the location of every consumer, and so both engage in personalized
pricing. Firm $A$'s uniform price thus applies to no consumer, and so only the direct effect has any bite.
By Lemma~\ref{lem:direct-effects}, relative to the no-sharing mechanism with price $p_A$, consumers  $\theta\in[\mu(p_A),1]$ are better off, whereas consumers $\theta\in[0,\mu(p_A))$ are worse off, under full data-sharing. For the firms, naturally firm $B$ is better off with no sharing and firm $A$ with full sharing. 
The effect on total profits, however, depends on the distribution $F$. 
For the case of uniformly distributed consumers, full data-sharing harms firms:

\begin{example}
When consumers are uniformly distributed, \citet{taylor2014consumer} show that profits are $\pi_A=\pi_B=t/4$ \citep[see also][]{thisse1988strategic}.
Note that total profits $\pi_A+\pi_B$ are higher under no sharing ($t/8 + 9t/16 = 11t/16$, by Example~\ref{ex:uniform-no-sharing}) than under full sharing ($t/4+t/4=t/2$). 
This implies that no mechanism $([0,1],r)$ is IR, regardless of $r$.
\end{example}
Although full data-sharing decreases total firm profits when consumers are uniformly distributed, there exist distributions of consumers under which full data-sharing increases profits.
However, even then full sharing does not lead to {\em maximal} profits. We now turn to mechanisms that do.

\section{The Indirect Effect and Firm-Optimal Data-Sharing}\label{sec:firm-optimal}
In this section we describe firm-optimal mechanisms, which exploit the {\em indirect} effect of data sharing. By Lemma~\ref{lem:direct-effects}, firm $B$'s profit from a consumer $\theta\in(1/2,1]$ is $p_A+t(2\theta-1)$. If $A$'s uniform price were to increase, this would likewise increase $B$'s profit from consumer $\theta$. Now, recall that, when there is no sharing, firm $A$ sets its uniform price by choosing $p_A\in P_A = \arg\max_{p}p\cdot F\left(\mu(p)\right)$. If $B$ were to share data about consumers in some interval $[\ul\theta, \mu(p_A)]$, however, then $A$ would offer consumers on this interval a personalized price. The uniform price would no longer apply to them, but would instead apply only to consumers $[0,\ul\theta)\cup(\mu(p_A),1]$. Firm $A$ may then benefit from increasing (decreasing) the uniform price above (below) $p_A$, at the same time increasing (decreasing) the profits of firm $B$ from consumers $\theta\in(1/2,1]$. This is the indirect effect of data sharing.

Firm $B$ can exploit both the indirect and direct effects of data sharing by sharing data both about consumers in $[\ul\theta, \mu(p_A)]$ and about consumers in $(\mu(p_A),1/2]$. Note, however, that sharing data about consumers in $(1/2, 1]$ is never beneficial, since it only results in a net loss to firms and has no indirect effect (by Lemma~\ref{lem:direct-effects}, above).

In general, the firm-optimal mechanism may depend on the distribution of consumers and other primitives of the model. In Proposition~\ref{prop:firm-optimal}, however, we show that when $v$ (the consumers' value for the good) is sufficiently high, then there is an essentially unique mechanism, with a corresponding equilibrium, that yield the firms maximal joint profits. The mechanism makes extreme use of the indirect effect of data sharing: Firm $B$ shares data about consumers $[0, 1/2]$, implying that firm $A$'s uniform price no longer applies to these consumers, and hence that this price can be almost arbitrarily high. $A$'s uniform price does apply to consumers in $(1/2, 1]$, for whom it serves as an outside option. However, because these consumers will always purchase from $B$ in equilibrium (by Lemma~\ref{lem:direct-effects}, above), the high outside option allows that firm to extract these consumers' entire surplus.

\begin{proposition}\label{prop:firm-optimal}
Fix $v>\frac{5t}{2(1-F(1/2))}$. Mechanism $M=([0,1/2],0)$ with equilibrium uniform price $p_A=v-t/2$ maximizes joint firm profits and is IR relative to any no-sharing equilibrium. Every other firm-optimal mechanism is of the form $M'=([0,1/2], r)$.
\end{proposition}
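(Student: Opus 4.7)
The plan is to first verify the proposed equilibrium and compute its joint profit, then to show optimality via a case split on the alternative mechanism's data-sharing set $M'_B$.

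\textbf{Equilibrium under $M$ and joint profit.} Under $M=([0,1/2],0)$, for each $\theta\in[0,1/2]$ both firms know the consumer's location, so personalized-price Bertrand (as in Lemma~\ref{lem:direct-effects}) gives $p_A(\theta)=t(1-2\theta)$ with $A$ winning. For $\theta\in(1/2,1]$, only $B$ personalizes; $B$'s best response to uniform price $p_A$ is $p_B(\theta)=\min\{p_A+t(2\theta-1),\,v-t(1-\theta)\}$. A direct calculation shows that $p_A+t(2\theta-1)\ge v-t(1-\theta)$ for all $\theta\ge 1/2$ iff $p_A\ge v-t/2$, so at $p_A=v-t/2$ firm $B$ extracts the full surplus $v-t(1-\theta)$ from every consumer in $(1/2,1]$. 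Because no consumer in $(1/2,1]$ ever buys from $A$, firm $A$'s profit is independent of $p_A$; in particular, $p_A=v-t/2$ is a best response, and the resulting joint profit is
\[
\Pi^\star=\int_0^{1/2} t(1-2\theta)\,f(\theta)\,d\theta+\int_{1/2}^1\bigl(v-t(1-\theta)\bigr)f(\theta)\,d\theta.
\]

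\textbf{Upper bound over all mechanisms.} To show $\Pi^\star$ is the global maximum joint profit, I split on whether $[0,1/2]\subseteq M'_B$. \emph{Case~A} ($[0,1/2]\subseteq M'_B$): by the same argument, firm $A$'s uniform price applies only to consumers in $(1/2,1]\setminus M'_B$, on which $B$ always undercuts, so $A$ remains indifferent and we may again take $p_A=v-t/2$. But for each $\theta\in M'_B\cap(1/2,1]$, Bertrand now yields $B$-price $t(2\theta-1)<v-t(1-\theta)$, a strict loss. Hence $M'_B=[0,1/2]$ (up to measure zero) is the unique maximizer within Case~A. \emph{Case~B} ($[0,1/2]\not\subseteq M'_B$): firm $A$ now has positive-measure uniform customers in $[0,1/2]\setminus M'_B$, and since its uniform profit $p_A\cdot F(T(p_A))$, where $T(p_A)=\{\theta\in[0,1/2]\setminus M'_B:p_A\le t(1-2\theta)\}$, vanishes for $p_A>t$, its equilibrium choice satisfies $p_A^\star\le t$. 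This caps $B$'s extraction on $(1/2,1]$ at $p_A^\star+t(2\theta-1)\le 2t$, yielding a per-consumer loss relative to $\Pi^\star$ of at least $(v-t(1-\theta))-2t\ge v-5t/2$, and thus a total loss of at least $(v-5t/2)(1-F(1/2))$ on $(1/2,1]$. The compensating gain on $[0,1/2]$ is bounded using $p_A^\star\le t$, which gives a per-consumer gain of at most $2t\theta$; the hypothesis $v>\frac{5t}{2(1-F(1/2))}$ is precisely what is needed for the $(1/2,1]$-loss to dominate.

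\textbf{Remaining pieces.} Individual rationality against any no-sharing equilibrium of Proposition~\ref{prop:no-sharing} follows from the same comparison: no-sharing joint profit is bounded by a $v$-independent constant, while $\Pi^\star$ grows linearly in $v$, and the hypothesis on $v$ is more than sufficient. Uniqueness is then immediate from the two cases—only $([0,1/2],r)$ mechanisms are firm-optimal, and the transfer $r$ does not affect joint profit. The main obstacle is \emph{Case~B}: attaining the exact threshold $\frac{5t}{2(1-F(1/2))}$ requires carefully balancing the gain on $[0,1/2]$ (which depends on how much $B$ can extract from consumers whose data is not shared, given the equilibrium $p_A^\star$) against the loss on $(1/2,1]$, and this is where the delicate interaction between $A$'s profit-maximization and the structure of $M'_B\cap[0,1/2]$ must be controlled.
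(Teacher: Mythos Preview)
Your overall structure---verify the equilibrium under $M$, then split on whether $[0,1/2]\subseteq M_B'$---matches the paper exactly, and Case~A is handled identically. In Case~B, however, the paper avoids the per-consumer gain/loss accounting entirely: once $p_A'\le t$, the maximal price either firm can charge \emph{any} consumer is $2t$, so total profit under $M'$ is at most $2t$; meanwhile $\Pi^\star\ge(v-t/2)(1-F(1/2))$ (this is just $B$'s revenue on $(1/2,1]$, ignoring $A$'s contribution entirely), and the hypothesis $v>\frac{5t}{2(1-F(1/2))}$ directly gives $(v-t/2)(1-F(1/2))>2t$. Your per-consumer comparison also goes through---the gain bound $2t\theta$ integrates to at most $tF(1/2)$, and one checks that $(v-5t/2)(1-F(1/2))>tF(1/2)$ is implied by the hypothesis---but the ``delicate interaction'' you flag as the main obstacle never actually arises; the paper's crude total-profit bound sidesteps it.

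There is one genuine gap in your IR argument. Since the proposition fixes $r=0$, IR requires \emph{each} firm to be weakly better off than under no sharing, not merely the sum. Your comparison (``no-sharing joint profit is bounded by a $v$-independent constant, while $\Pi^\star$ grows linearly in $v$'') handles firm~$B$---its profit under $M$ is at least $(v-t/2)(1-F(1/2))$, while under no sharing it is at most $2t$---but says nothing about firm~$A$, whose profit $\int_0^{1/2}t(1-2\theta)f(\theta)\,d\theta$ under $M$ does not grow with $v$. The paper closes this by invoking Lemma~\ref{lem:direct-effects} directly: for every $\theta\in[0,\mu(p_A))$, sharing raises $A$'s price from $p_A$ to $t(1-2\theta)>p_A$, and for every $\theta\in[\mu(p_A),1/2]$, $A$ now earns $t(1-2\theta)\ge 0$ where before it earned nothing. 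Hence $A$ is strictly better off regardless of $v$, and this is what you need to add.
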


\begin{proof}
First, we show that in mechanism $M=([0,1/2],0)$ there is an equilibrium with uniform price $p_A=v-t/2$. In this equilibrium, both firms charge consumers $\theta\in[0,1/2]$ a personalized price. By Lemma~\ref{lem:direct-effects} above, all consumers $\theta\in(1/2,1]$ purchase from $B$. This implies that firm $A$'s uniform price applies to no consumer, and so it can be set at $v-t/2$ (any deviation in the uniform price will not affect $A$'s profits). Firm $A$ sets personalized price $p_A(\theta)=t(1-2\theta)$ for consumers $\theta\in[0,1/2]$. Furthermore, firm $B$ sets personalized prices $p_B(\theta)=0$ for $\theta\in[0,1/2]$ and $p_B(\theta)=v-t(1-\theta)$ for $\theta\in(1/2,1]$.

At these prices, consumers $\theta\in[0,1/2]$ purchase from $A$. Consumers $\theta\in(1/2,1]$ purchase from $B$: First, since $p_B(\theta) -t(1-\theta) \geq p_A(\theta)-t\theta$, they prefer purchasing from $B$ than from $A$. Second, the utility of such a consumer is $v-p_B(\theta)  - t(1-\theta) = 0$, so the consumer is willing to purchase. Note that firm $B$ extracts all the surplus from these consumers.

Next, we show that no other mechanism and equilibrium lead to higher total firm profits.  By Lemma~\ref{lem:direct-effects} and the prices above, the total profits of firms under this mechanism and equilibrium are at least
$$\left(v-\frac{t}{2}\right)\left(1-F\left(\frac{1}{2}\right)\right),$$
namely, the minimal price charged by firm $B$ to any consumer in $(1/2, 1]$ times the mass of this interval of consumers. In fact, this is a lower bound on the profits of firm $B$.
Consider now some other equilibrium $M'=(M_B,r)$, where $M_B\neq [0,1/2]$. If $M_B\cap [0,1/2]\neq [0,1/2]$, then firm $A$'s uniform price $p_A'$ applies to some consumer, and so $p_A'\leq t$. This implies that total profits of firms are at most $2t$, the maximal price firm $B$ (or firm $A$) can charge a consumer whose outside option is $p_A=t$. However, this is lower than under our mechanism, as
$$\left(v-\frac{t}{2}\right)\left(1-F\left(\frac{1}{2}\right)\right)>2t \Leftarrow v>\frac{5t}{2\left(1-F\left(\frac{1}{2}\right)\right)}.$$
If $M_B\cap [0,1/2]= [0,1/2]$ but $M_B\neq [0,1/2]$, then firm $B$ shares the location of some consumer $\theta\in(1/2,1]$. However, consumer $\theta$ will still buy from $B$, by Lemma~\ref{lem:direct-effects}, and so such sharing only decreases total profits.

In addition, for any other mechanism $M'=([0,1/2],r)$, personalized prices $p_A'(\theta)$ and $p_B'(\theta)$ for consumers $\theta\in[0,1/2]$ must be the same as those in $M$. And personalized prices $p_B'(\theta)$ for consumers $\theta\in(1/2,1]$ can be no higher than those in $M$, since in the latter the firm extracted every consumer's full surplus. Thus, every other mechanism of the form $M'=([0,1/2], r)$
yields the firms weakly lower profits.

Finally, to see that $M$ is IR, observe that, by Lemma~\ref{lem:direct-effects}, firm $A$ is strictly better off under $M$ than under no sharing. Furthermore, firm $B$ is also strictly better off under $M$, since that firm's utility is bounded above by $2t$ under no sharing. Thus,  mechanism $M$ is IR relative to any no-sharing equilibrium.
\end{proof}

\begin{example}\label{ex:firm-optimal}
When consumers are uniformly distributed, the mechanism described in Proposition~\ref{prop:firm-optimal} is actually firm-optimal for all $v>2t$, as we now show.
This mechanism leads to profits $\pi_A=t/4$ and 
$$\pi_B=\int_{1/2}^1 (v-t(1-\theta))d\theta = \frac{v}{2} -\frac{t}{8}>\frac{7t}{8},$$
where the inequality follows since $v>2t$. Thus,  total profits are at least $9t/8$.
In contrast, consider any mechanism $M'=(M_B,r)$ in which firm $A$'s uniform price applies to a consumer in $[0,1/2]$, and fix some uniform price $p_A'\leq t$. Consumers $\theta\in (1/2,1]$ buy from $B$, leading to total profits at most $3t/4$ from these consumers. Consumers $\theta\in[0,1/2]$ either buy from $A$ at uniform price $p_A'$ or at personalized price $t(1-2\theta)$, or from $B$ at personalized price $t(2\theta-1)$ or $p_A'+t(2\theta-1)$ (depending on whether $\theta\in M_B$). Total profits are maximized when $p_A'=t$, consumers $\theta\in[0,1/4]$ buy at $A$'s personalized price, and the rest buy from $B$ at price $t+t(2\theta-1)$. Profits to $A$ from $[0,1/4]$ and to $B$ from $(1/4,1]$ are each equal to $3t/16$. Total profits from $M'$ are thus bounded above by $3t/4 + 2(3t/16) = 9t/8$, which is equal to the lower bound on profits from $([0,1/2],0)$.
\end{example}

\begin{remark}
Proposition~\ref{prop:firm-optimal} provides a sufficient condition under which mechanism $M=([0,1/2],0)$ is firm-optimal for {\em some} equilibrium (namely, the one with uniform price $p_A=v-t/2$). However, under this mechanism there are other equilibria, which involve lower uniform prices, and that yield lower firm profits. In Proposition~\ref{prop:firm-optimal-3} in Appendix~\ref{apx:firm-optimal} we describe a different mechanism with $r=0$ that, while not firm-optimal, yields both firms strictly higher profits than under no sharing in {\em every} equilibrium.
\end{remark}

\section{Pareto-Improving Data-Sharing}\label{sec:pareto-improving}
In Section~\ref{sec:firm-optimal} above we show that firm $B$ can sell data in a way that maximizes joint firm profits, and hence allows that firm to charge a high price for the data. Such sharing, however, comes at the expense of consumers. In particular, under the equilibrium of Proposition~\ref{prop:firm-optimal}, firms extract the entire surplus of consumers located in $[1/2,1]$. In this section we show that there exist other data-sharing mechanisms that increase firm profits relative to no sharing, while at the same time also increasing consumers' utilities.

Recall that $P_A = \arg\max_{p}p\cdot F\left(\mu(p_A)\right)$, and that, by Proposition~\ref{prop:no-sharing}, the set of equilibria under no data-sharing consist of uniform prices $p_A\in P_A$ by firm $A$ and respective personalized prices $p_B(\theta)=\max\{0,  p_A+t(2\theta-1)\}$ by firm $B$. Denote by $E(p_A)$ the no-sharing equilibrium with uniform price $p_A$. In the following proposition we show that for each such no-sharing equilibrium there exists a Pareto-improving mechanism.

\begin{proposition}\label{prop:pareto-improving}
For every $p_A\in P_A$ there exists $r$ such that mechanism $M=\left(\left[\mu(p_A),\frac{1}{4}+\frac{\mu(p_A)}{2}\right],r\right)$ with uniform price $p_A$ is IR and weakly beneficial to every consumer, relative to $E(p_A)$. Furthermore, $M$ yields higher total firm profits than any other mechanism that is weakly beneficial to every consumer relative to $E(p_A)$.
\end{proposition}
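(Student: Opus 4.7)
The plan is to first verify that $M$ with uniform price $p_A$ defines an equilibrium that is consumer-weakly-beneficial and individually rational for both firms, and then to establish the optimality claim via a per-consumer decomposition of firm profit into social surplus minus consumer welfare.

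For the first part, note that $M_B = [\mu(p_A), 1/4 + \mu(p_A)/2] \subseteq [\mu(p_A), 1/2)$, so every shared consumer is closer to firm $A$; hence in the personalized-pricing subgame firm $A$ wins each $\theta \in M_B$ at price $t(1-2\theta)$, independent of the uniform price. Firm $A$'s profit as a function of uniform $p$ is therefore a constant plus $p \cdot |[0,\mu(p)) \setminus M_B|$. Splitting into cases based on how $\mu(p)$ compares to the endpoints of $M_B$, in each case the $p$-dependent term is bounded above by $p F(\mu(p)) \leq p_A F(\mu(p_A))$ by $p_A \in P_A$, so $p = p_A$ is optimal. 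Applying Lemma~\ref{lem:direct-effects}, consumers in $[0,\mu(p_A))$ and in $(1/4+\mu(p_A)/2, 1]$ are unaffected while consumers in $M_B$ gain $p_A - t(1-2\theta) \geq 0$. Firm $A$'s gross gain $\int_{M_B} t(1-2\theta) f(\theta) d\theta$ exceeds firm $B$'s gross loss $\int_{M_B} (p_A + t(2\theta-1)) f(\theta) d\theta$ by $\int_{M_B}(2t(1-2\theta) - p_A) f(\theta) d\theta$, whose integrand decreases linearly from $p_A$ at $\theta = \mu(p_A)$ to $0$ at $\theta = 1/4 + \mu(p_A)/2$ and is therefore strictly positive. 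Hence any $r$ between firm $B$'s gross loss and firm $A$'s gross gain makes $M$ IR.

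For optimality, let $\delta(\theta)$, $\delta_s(\theta)$, and $\delta_c(\theta)$ denote the per-consumer changes (relative to $E(p_A)$) in firm profit, social surplus, and consumer utility, so that $\delta = \delta_s - \delta_c$ with $\delta_c \geq 0$ by consumer weak-benefit. For any consumer-weakly-beneficial mechanism $(M_B', r')$ with equilibrium uniform price $p'$, the pointwise constraint $\delta_c \geq 0$ forces $p' \leq p_A$ (otherwise every $\theta$ outside $M_B'$ is strictly worse off) and $M_B' \cap [0, \mu(p_A)) = \emptyset$ up to measure zero (otherwise $A$'s personalized price $t(1-2\theta) > p_A$ strictly hurts such a consumer). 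I would then perform a case analysis on the partition $[0,\mu(p_A))$, $[\mu(p_A), 1/4+\mu(p_A)/2]$, $(1/4+\mu(p_A)/2, 1/2)$, $(1/2, 1]$ and show that $\delta(\theta)$ is bounded above by $0$, $2t(1-2\theta) - p_A$, $0$, and $0$ respectively --- exactly the values delivered by $M$. Integrating then gives $\Delta(M') \leq \Delta(M)$.

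The main obstacle is the case analysis on $[\mu(p_A), 1/4+\mu(p_A)/2]$, where three alternatives compete: (a) include $\theta$ in $M_B'$, yielding $\delta = 2t(1-2\theta) - p_A$ exactly (matching $M$); (b) exclude $\theta$ and have $A$ serve her at uniform price $p' < t(1-2\theta)$, yielding $\delta = p' - p_A + t(1-2\theta) < 2t(1-2\theta) - p_A$; or (c) exclude $\theta$ and have $B$ undercut at $p' + t(2\theta-1)$, yielding $\delta = p' - p_A \leq 0$, which is at most $2t(1-2\theta) - p_A$ since the latter is nonnegative on this region. The tightness of the bounds on (b) and (c) hinges on the $p' \leq p_A$ inequality extracted from the outside-$M_B'$ consumer-welfare constraint; the analogous bounds on the other three regions are more direct.
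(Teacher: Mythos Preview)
Your proposal is correct and follows essentially the same region-by-region strategy as the paper: first verify that $p_A$ remains firm~$A$'s optimal uniform price under $M$ by comparison with the no-sharing optimum, then use Lemma~\ref{lem:direct-effects} to bound the per-consumer profit change on each subinterval of $[0,1]$. Your explicit tracking of the competing mechanism's uniform price $p'$ is slightly more careful than the paper's more qualitative three-condition argument, though the surplus decomposition $\delta=\delta_s-\delta_c$ you introduce is never actually invoked and can be omitted.
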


The mechanism $M$ described in Proposition~\ref{prop:pareto-improving} does not decrease the utility of any consumer. Moreover, by bullet 2 of Lemma~\ref{lem:direct-effects}, that mechanism  {\em strictly increases} the utilities of a subset of consumers---namely, those located in 
$\left(\mu(p_A),\frac{1}{4}+\frac{\mu(p_A)}{2}\right]$.

The main idea underlying the construction for Proposition~\ref{prop:pareto-improving} is that firm $B$  shares data about every consumer $\theta$ that satisfies two conditions: (i) with no sharing, consumer $\theta$ prefers to pay $B$'s personalized price than $A$'s uniform price; (ii) sharing consumer $\theta$'s location leads to a net increase in firm profits. Note that these consumers are all closer to $A$ than to $B$, so that the welfare and profit gain is obtained due to an increase in efficiency. Finally, the construction is such that $A$'s uniform price under $M$ remains the same as with no sharing, which guarantees that consumers close to $A$ do not pay a higher price than under no sharing, but also that firms maximize their joint profits subject to this constraint.

\begin{proof}
We first show that, under mechanism $M$, it is indeed optimal for firm $A$ to choose uniform price $p_A$. To see this, observe first that a higher uniform price $p'_A>p_A$ cannot increase firm $A$'s profits. This is because, if it did, then that same uniform price would increase the firm's profits in $E(p_A)$, since $\mu(p'_A)<\mu(p_A)$. It does not, because $p_A$ is an optimal uniform price of firm $A$ under no sharing. Next, observe that a lower uniform price $p'_A<p_A$ also will not benefit firm $A$. A lower uniform price increases the set of consumers to whom $A$'s uniform price applies from $[0,\mu(p_A)]$ to $[0,\mu(p_A')]\setminus  T$, where $T=\left[\mu(p_A),\frac{1}{4}+\frac{\mu(p_A)}{2}\right]$ is the set of consumers for whom firm $A$'s {\em personalized} price applies. The benefit to firm $A$ of changing the uniform price from $p_A$ to $p_A'$ in mechanism $M$ is thus lower than the benefit of making that same change under no sharing. However, since $p_A$ is an optimal price under no sharing, there can be no strict benefit to changing the price to $p_A'$.

Next, we show that $M$ is Pareto-improving, and that it yields firms maximal joint profits relative to all Pareto-improving mechanisms. Observe that, with no sharing, consumers in $[0,\mu(p_A))$ purchase from $A$ at uniform price $p_A$ and obtain utility
$v-t\theta-p_A$, whereas consumers
in $[\mu(p_A),1]$ purchase from $B$ at personalized price $p_B(\theta)=p_A+t(2\theta-1)$ and obtain utility
$v-t(1-\theta)-t(2\theta-1)-p_A=v-t\theta-p_A$. 

Now consider some data-sharing mechanism that does not lower any consumer's utility. In order for the consumer's utility not to
decrease after sharing, 
one of the following three conditions must be satisfied:
\begin{enumerate}
\item The consumer purchases from the same firm as with no sharing, but at a (weakly) lower price.
\item The consumer switches to the other firm, and that other firm is closer to the consumer. The consumer may pay a higher price,
but the price increase is no higher than the savings in lower transportation costs.
\item The consumer switches to the other firm, and that other firm is farther from the consumer. The consumer pays a lower price,
and the price decrease is higher than the increase in transportation costs.
\end{enumerate}

No mechanism that maximizes joint firm profits will facilitate condition 3. Thus, in any mechanism that maximizes total firm profits,
consumers in $[0,\mu(p_A))$ will purchase from $A$ and consumers in $(1/2, 1]$ will purchase from $B$.

Now, since sharing data about a consumer in $[0,\mu(p_A))$ will lead to a higher price for that consumer (by Lemma~\ref{lem:direct-effects}), it will no longer satisfy the
conditions above, and so no data can be shared
about such consumers. In addition, since sharing data about a consumer in $(1/2, 1]$ will lead to a lower price for that consumer,
no firm-profit-maximizing mechanism will facilitate sharing about such consumers either. Thus, the only consumers about whom
data may be shared are those in $[\mu(p_A), 1/2]$.

Without data sharing, consumers in $[\mu(p_A), 1/2]$ purchase from $B$ at personalized price $p_A+t(2\theta-1)$. If $B$ shares data about
a consumer $\theta \in [\mu(p_A), 1/2]$, then in the resulting equilibrium that consumer will purchase from $A$ at personalized price
$t(1-2\theta)$. Note that this increases the consumer's utility (by bullet 2 of Lemma~\ref{lem:direct-effects}), and so consumer $\theta$
will be better off with sharing (an instance of condition 2 above). 

Firm profits also change. Again by bullet 2 of Lemma~\ref{lem:direct-effects}, total firm profits are maximized precisely when data is shared only about consumers $\theta\in\left[\mu(p_A),\frac{1}{4}+\frac{\mu(p_A)}{2}\right]$, as mechanism $M$ does. Thus, no other mechanism that weakly increases consumer utility can achieve higher total firm profits. 

Finally, since total firm profits increase, there exists $r$ such that each firm is better off under $M$ than under no sharing.
\end{proof}

\section{Consumer Opt-In}\label{sec:opt-in}
Proposition~\ref{prop:pareto-improving} above shows that there exist mechanisms that are strictly Pareto-improving, increasing firm profits as well as consumer welfare. However, these mechanisms are not optimal for firms---Proposition~\ref{prop:firm-optimal} identifies a different mechanism as maximizing firm profits, a mechanism that does so at the expense of consumers. How can a policymaker induce firms to share data in a Pareto-improving manner, rather than in a profit-maximizing manner? In this section we identify one way in which a policymaker can do this: by asking each consumer whether or not they agree to have their data shared, and then permitting firms to share data only about consumers who have agreed.

In order to analyze such consumer opt-in regulation, we first extend the model to include a preliminary opt-in stage. After setting up the model, we present two results. The first, Proposition~\ref{prop:opt-in} in Section~\ref{sec:opt-in-pareto}, states that, under consumer opt-in, there is an equilibrium of the extended model wherein firms choose the Pareto-improving mechanism of Proposition~\ref{prop:pareto-improving}. The equilibrium is such that a certain segment of consumers refuses to opt in, and that, subject to this constraint, the profit-maximizing mechanism for the firms is the Pareto-improving mechanism.

Now, although consumer opt-in can lead to the choice of the Pareto-improving mechanism, there are other equilibria---ones where different sets of consumers opt in---that do not. However, in our second result here---Proposition~\ref{prop:consumer-optimal} in Section~\ref{sec:opt-in-consumer-optimal}---we show that the equilibrium of Section~\ref{sec:opt-in-pareto}, where the Pareto-improving mechanism is chosen, is, in a sense, optimal for the consumers. Thus, although consumer opt-in may lead to a multitude of equilibria, the one that is perhaps focal for consumers is the one that leads to our Pareto-improving mechanism.

\subsection{The Extended Model}\label{sec:opt-in-model}
We begin by extending the model of Section~\ref{sec:model} with a preliminary stage, in which each consumer simultaneously chooses whether or not to opt in to having location data shared. Denote the set of consumers who opted in as $C$. Only then do firms engage in a data-sharing mechanism $M=(M_B,r)$; however, firms are restricted to choosing a mechanism for which $M_B\subseteq C$. Such mechanisms are {\em feasible for $C$}.

We assume that firms bargain over the choice of mechanism efficiently---that is, they choose a mechanism $M$ that maximizes total firm profits, subject to the opt-in constraint. One way to implement such efficient bargaining is when one of the firms makes the other a take-it-or-leave-it offer by suggesting a mechanism $(M_B, r)$ that is feasible for $C$. Depending on which firm makes the offer, the chosen price transfer $r$ will vary to favor the offering firm. Either way, however, firms will choose to offer a mechanism that maximizes joint firm-profits. This assumption is stated formally in Definition~\ref{def:tfne} below as part of the solution concept. In addition, as in the previous sections, we assume that, absent data-sharing, firms play the no-sharing equilibrium $E(p_A)$ for some $p_A\in P_A$. 

In this extended model there is an additional, technical complication. We are assuming that firms choose a mechanism that is feasible for some $C$. However, since $C$ is generated by the set of consumers who choose to opt in, it may not be a measurable set. Thus, the firms' optimization problem may not be well-defined at every $C$. One way to get around this problem is to consider the Nash equilibria of this extensive-form game (rather than the subgame perfect equilibria). However, this is somewhat unsatisfying, as such equilibria may be sustained by strange off-equilibrium behavior---namely, the presence of empty threats. Instead, we will use an equilibrium notion that is weaker than subgame perfect equilibrium but nonetheless suffices to eliminate empty threats. The general definition is due to \citet{gradwohl2013sequential}; here we give a specialized version that applies to our specific game.

For the definition, let $L$ denote the set of 
subsets of $[0,1]$, and let $\scM(C)$ denote the set of all mechanisms feasible for $C$.
\begin{definition}\label{def:tfne}
A set $C^*\in L$ and functions $m:L\rightarrow \scM(L)$ and  $p:L\rightarrow \real_+$ form a {\em threat-free Nash equilibrium (TFNE)} if
\begin{enumerate}
\item For every $C$, mechanism $m(C)$ is feasible for $C$, and $p(C)$ is an equilibrium uniform price for firm $A$ under $m(C)$.
\item For every $\theta\in C^*$, consumer $\theta$ is weakly better off under $m(C^*)$ than under $m(C^*\setminus\{\theta\})$ (with respective uniform prices $p(C^*)$ and $p(C^*\setminus\{\theta\})$).
\item For every $\theta\in[0,1]\setminus C^*$, consumer $\theta$ is weakly better under $m(C^*)$ than under $m(C^*\cup \{\theta\})$  (with respective uniform prices $p(C^*)$ and $p(C^*\cup\{\theta\})$).
\item For every $\theta\in [0,1]$ and $C\in\left\{C^*\setminus\{\theta\},C^*\cup\{\theta\}\right\}$, mechanism $m(C)$ with uniform price $p(C)$ is IR and jointly firm-optimal relative to all mechanisms that are feasible for $C$ (with corresponding uniform prices).
\end{enumerate}
\end{definition}

For comparison, in a Nash equilibrium bullet 4 would be replaced by requiring IR and joint firm-optimality only for $C^*$. In a subgame perfect equilibrium, in contrast, bullet 4 would require these for all sets $C$. A TFNE is a compromise between the two, requiring IR and joint firm-optimality for $C^*$ and for all sets $C$ that differ from $C^*$ by a single consumer's unilateral deviation. 

\subsection{Pareto-Improving Equilibrium}\label{sec:opt-in-pareto}
Given the extended model above, we can now state our  proposition on the benefit of consumer opt-in.

\begin{proposition}\label{prop:opt-in}
For every $p_A\in P_A$ there exists a TFNE $(C^*,m,p)$ of the extended model in which $m(C^*)$ is the Pareto-improving mechanism $M=\left(\left[\mu(p_A),\frac{1}{4}+\frac{\mu(p_A)}{2}\right],r\right)$, for some $r$.
\end{proposition}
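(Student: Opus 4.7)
The plan is to exhibit an explicit TFNE in which $C^*$ is precisely the interval of consumers that the Pareto-improving mechanism of Proposition~\ref{prop:pareto-improving} would share data about. I set $C^* = \left[\mu(p_A), \tfrac{1}{4}+\tfrac{\mu(p_A)}{2}\right]$, let $m(C^*) = M$ be the mechanism of Proposition~\ref{prop:pareto-improving} with $p(C^*) = p_A$, and pick the transfer $r$ so that both firms strictly benefit relative to $E(p_A)$ (possible since that proposition guarantees a strictly positive joint-profit gain). For sets $C$ that differ from $C^*$ by a single consumer I would choose $m(C)$ carefully, as described below; for all other $C$, I would set $m(C) = (\emptyset, 0)$ and $p(C) = p_A$, which trivially satisfies item~1 of Definition~\ref{def:tfne}.

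The key observation enabling item~4 is that, since $F$ has a density, adding or removing a single consumer from $M_B$ does not change either firm's profit integral. Thus at the deviation sets $C^* \pm \{\theta\}$, the joint-profit-maximizing $M_B$ is still (up to a single point) the interval $\left[\mu(p_A), \tfrac{1}{4}+\tfrac{\mu(p_A)}{2}\right]$, and the argument of Proposition~\ref{prop:pareto-improving} that $p_A$ is an equilibrium uniform price continues to apply, since it relied only on $M_B$ being disjoint from $[0,\mu(p_A))$ and $(1/2, 1]$. IR then carries over from Proposition~\ref{prop:pareto-improving}.

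For item~2, I would fix $\theta \in C^*$ and set $m(C^* \setminus \{\theta\}) = (C^* \setminus \{\theta\}, r)$; by bullet~2 of Lemma~\ref{lem:direct-effects}, opting in gives $\theta$ a utility gain of $p_A - t(1-2\theta) \geq 0$ whenever $\theta \geq \mu(p_A)$, which holds throughout $C^*$. For item~3 I would split into cases: if $\theta \in [0, \mu(p_A))$, I select $m(C^* \cup \{\theta\}) = (C^*, r)$, which is permissible because including or excluding the measure-zero point $\theta$ does not affect firm profits, leaving $\theta$'s outcome unchanged; if $\theta \in \left(\tfrac{1}{4}+\tfrac{\mu(p_A)}{2}, 1/2\right]$, sharing $\theta$ strictly lowers joint profits (bullet~2 of Lemma~\ref{lem:direct-effects}), so any jointly optimal mechanism automatically excludes $\theta$; and if $\theta \in (1/2, 1]$, bullet~1 gives the same conclusion. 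In each case $\theta$'s outcome coincides with that under $m(C^*)$.

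The hard part is reconciling item~4 with items~2 and~3 at the single-consumer deviations: when $\theta \in [0, \mu(p_A))$ is added to $C^*$, the firms are indifferent between the mechanisms that do and do not share $\theta$'s data (since $\theta$ has measure zero), yet these give the consumer very different prices, $t(1-2\theta) > p_A$ versus $p_A$. The TFNE exists precisely because among the jointly firm-optimal mechanisms at $C^* \cup \{\theta\}$ one can always select the one that does \emph{not} share $\theta$, thereby preserving the consumer's opt-out incentive.
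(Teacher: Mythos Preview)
Your proposal is correct and follows essentially the same construction as the paper's proof: choose $C^*$ to contain the Pareto-improving interval, set $m(C^*)=M$ with uniform price $p_A$, and use the measure-zero nature of single-consumer deviations to carry the IR and joint-optimality arguments of Proposition~\ref{prop:pareto-improving} over to the deviation sets, verifying items 2--4. The one notable difference concerns the deviation $C^*\cup\{\theta\}$ for $\theta\in[0,\mu(p_A))$: the paper has the firms \emph{share} $\theta$'s data there (so $\theta$ faces the strictly higher personalized price $t(1-2\theta)>p_A$ and strictly prefers not to opt in), whereas you have the firms \emph{not} share it (leaving $\theta$ indifferent); both tie-breakings lie in the set of jointly firm-optimal feasible mechanisms and both satisfy item~3's weak-preference requirement, so either choice yields a valid TFNE.
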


\begin{proof}
Suppose consumers $C^*=[\mu(p_A),\overline\theta]$ opt in, for any $\overline\theta\geq \frac{1}{4}+\frac{\mu(p_A)}{2}$. Furthermore, suppose that, as long as consumers 
$C^*$ opted in, firms choose the mechanism $M=\left(\left[\mu(p_A),\frac{1}{4}+\frac{\mu(p_A)}{2}\right],r\right)$ with uniform price $p_A$ of Proposition~\ref{prop:pareto-improving}, where $r$ is chosen to make the mechanism IR. If some consumer $\theta \in C^*$ does not opt in, then firms use the mechanism $M_\theta=\left(\left[\mu(p_A),\frac{1}{4}+\frac{\mu(p_A)}{2}\right]\setminus\{\theta\},r\right)$ with the same uniform price $p_A$. Mechanism $M_\theta$ is identical to $M$, except that consumer $\theta$ does not face $A$'s personalized price $p_A(\theta)=t(1-2\theta)$, but rather the uniform price $p_A$. 
If some consumer $\theta \in [0,1]\setminus C^*$ does  opt in, then firms use the mechanism $M^\theta$.  If $\theta>\overline{\theta}$, then $M^\theta$ is identical to $M$. If $\theta<\mu(p_A)$, then mechanism $M^\theta=\left(\left[\mu(p_A),\frac{1}{4}+\frac{\mu(p_A)}{2}\right]\setminus\{\theta\},r\right)$: this is also identical to $M$, except that consumer $\theta$ does not face $A$'s uniform price $p_A$, but rather the personalized price $p_A(\theta)=t(1-2\theta)$. 
Finally, in all other cases firms choose the no-sharing mechanism $M_\emptyset=(\emptyset, 0)$ and uniform price $p_A$.

We now show that these strategies form a TFNE.
First, as in the proof of Proposition~\ref{prop:pareto-improving}, if firm $B$ shares data about a set of consumers $\Theta\subseteq\left(\frac{1}{4}+\frac{\mu(p_A)}{2}, \overline\theta\right]$, this can only decrease total firm profits. Similarly, if firm $B$ does not share data about a set of consumers $\Theta\subseteq \left[\mu(p_A),\frac{1}{4}+\frac{\mu(p_A)}{2}\right]$, this can only decrease total firm profits. Thus, the mechanism $(M_B,r)$ that maximizes total firm profits and for which $M_B\subseteq C^*$ is the Pareto-improving mechanism $M=\left(\left[\mu(p_A),\frac{1}{4}+\frac{\mu(p_A)}{2}\right],r\right)$. Furthermore, note that the same holds also for $M_\theta$ and $M^\theta$, since these mechanisms differ from $M$ only on a single (0-measure) consumer.

Furthermore, under $m$ and $C^*$, it is optimal for every consumer $\theta\in C^*$ to opt in and every consumer $\theta\in[0,1]\setminus C^*$ not to opt in. If a consumer 
$\theta\in C^*$
does not opt in, then the consumer replaces personalized price $p_A(\theta)=t(1-2\theta)$ by the uniform price $p_A$. However, since $\theta>\mu(p_A)$ it holds that $p_A(\theta)=t(1-2\theta)<p_A$. Thus, the consumer faces a higher price, and is thus better off opting in.

If a consumer 
$\theta\in [0,1]\setminus C^*$
does opt in, then there are two cases.  If $\theta <\mu(p_A)$, then under $M^\theta$ the consumer replaces uniform price $p_A$ by personalized price $p_A(\theta)=t(1-2\theta)$. However, since $\theta<\mu(p_A)$ it holds that $p_A(\theta)=t(1-2\theta)>p_A$. Thus, the consumer faces a higher price, and is thus better off not opting in.
Alternatively, if  $\theta >\overline\theta$, then the consumer faces the same prices and gets the same utility as under $M$, since the consumer's data is not shared in either case.
\end{proof}

Proposition~\ref{prop:opt-in} shows that, when consumers can choose whether or not to opt in to having their data shared, and firms are allowed to only share the data of consumers who have opted in, then the equilibrium mechanism is Pareto improving. There are other equilibria that lead to a Pareto-improving mechanism. In fact, as long as consumers 
$\theta\in[0,\mu(p_A))$ do {\em not} opt in to having their data shared, the mechanism that maximizes firms' profits will be Pareto improving. 

However, there are also other equilibria in which the chosen mechanism is not Pareto improving.  Consider the following strategies: Consumers $[0,1/2]$ opt in to having their data shared, and firms choose the firm-optimal mechanism $M=([0,1/2], 0)$ from Proposition~\ref{prop:firm-optimal}. If some consumer $\theta\in[0,1/2]$ does not opt in, then firms use the mechanism $M_\theta=([0,1/2]\setminus\{\theta\}, 0)$. This mechanism is identical to $M$, except that consumer $\theta$ faces firm $A$'s uniform price $p_A=v-t/2$ rather than the personalized price $p_A(\theta)=t(1-2\theta)$. This is no better for consumer $\theta$, and so these strategies form an equilibrium. Why, then, would consumers choose to collectively opt in as in Proposition~\ref{prop:opt-in}? 

\subsection{Consumer-Optimal Equilibrium}\label{sec:opt-in-consumer-optimal}

We now show that the equilibrium of Proposition~\ref{prop:opt-in} is focal for the consumers. In particular, we show that it maximizes consumer welfare, relative to all other equilibria that leave no consumer worse off.

Fix some $p_A\in P_A$, and observe that there always exists a TFNE of the extended game in which no consumer opts in, and that this leads to consumer utilities as derived from equilibrium $E(p_A)$ in mechanism $M_\emptyset=(\emptyset, 0)$.
Next, let us  consider other opt-in choices for consumers. For a set $C$ and mechanism $M$ feasible for $C$, say that $M$ is {\em Pareto-improving for the consumers} if the resulting utility of every consumer is weakly higher than under $M_\emptyset$.
We now show that consumers' utilities in the equilibrium of Proposition~\ref{prop:opt-in} are optimal:

\begin{proposition}\label{prop:consumer-optimal}
Fix $C\subseteq[0,1]$ and a mechanism $M$ with uniform price $q_A$ that is feasible for $C$ and that is Pareto-improving for the consumers. If $M$ yields strictly higher total utility to the consumers than $M^*$, then $M$ will not be chosen by the firms in any TFNE in which consumers $C$ opt in.
\end{proposition}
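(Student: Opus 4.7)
The plan is to prove by contradiction. Suppose $M$ (with uniform price $q_A$) is chosen in some TFNE $(C^*,m,p)$ with $C^*=C$. By item~4 of Definition~\ref{def:tfne}, $M$ is jointly firm-optimal relative to all mechanisms feasible for $C$. I will exhibit a feasible alternative with strictly higher joint firm profits, contradicting this.

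The natural candidate is $M^{**}=\left(M^*_B\cap C,\,r^{**}\right)$ with uniform price $p_A$, where $M^*_B=\left[\mu(p_A),\tfrac{1}{4}+\tfrac{\mu(p_A)}{2}\right]$ is the sharing set from $M^*$ and $r^{**}$ is chosen to make $M^{**}$ IR. Feasibility for $C$ is immediate from $M^{**}_B\subseteq C$. Because $M^{**}_B\subseteq M^*_B\subseteq [\mu(p_A),1/2]$, the same arguments used in the proof of Proposition~\ref{prop:pareto-improving} show both that $p_A$ is an equilibrium uniform price for firm $A$ under $M^{**}$ and that $M^{**}$ is Pareto-improving for the consumers.

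The key step is to show that $M^{**}$ uniquely (up to measure zero) maximizes joint firm profits among all Pareto-improving mechanisms feasible for $C$. Adapting Proposition~\ref{prop:pareto-improving}'s argument to the constraint $M_B\subseteq C$: by Lemma~\ref{lem:direct-effects}, sharing data about consumer $\theta$ strictly increases joint firm profits precisely when $\theta\in M^*_B$; strictly decreases them when $\theta\in\left(\tfrac{1}{4}+\tfrac{\mu(p_A)}{2},1\right]$; and is ruled out by Pareto-improvement when $\theta\in[0,\mu(p_A))$. Hence the joint-profit maximum over Pareto-improving feasible mechanisms is attained uniquely by sharing exactly $M^*_B\cap C = M^{**}_B$.

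To conclude, I argue $M\ne M^{**}$ via a consumer-utility comparison. Consumers in $M^*_B\setminus C$ obtain utility $v+t\theta-t$ under $M^*$ but only $v-t\theta-p_A$ (strictly smaller, as $\theta\geq\mu(p_A)$) under $M^{**}$, while all other consumers obtain equal utility under $M^*$ and $M^{**}$. Hence total consumer utility under $M^{**}$ is weakly less than under $M^*$, and by hypothesis strictly less than under $M$, forcing $M$ and $M^{**}$ to differ on a positive-measure set. By the uniqueness in the previous paragraph, joint firm profits under $M^{**}$ strictly exceed those under $M$---the desired contradiction. The main obstacle is handling an equilibrium uniform price $q_A$ under $M$ that may differ from $p_A$: Pareto-improvement forces $q_A\leq p_A$, and since $p_A$ maximizes firm $A$'s no-sharing profit, any lower uniform price only weakly decreases joint profits without enabling new sharing-induced gains, so the uniqueness argument carries through.
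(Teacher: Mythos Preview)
Your proposal is correct and follows essentially the same route as the paper. Both arguments first use Pareto-improvement for consumers to pin down that $q_A\leq p_A$ and that no $\theta\in[0,\mu(p_A))$ can be shared, and then invoke Lemma~\ref{lem:direct-effects} to conclude that any extra consumer welfare (beyond $M^*$) must come from sharing a positive-measure set of consumers in $\left(\tfrac{1}{4}+\tfrac{\mu(p_A)}{2},1\right]$, which strictly lowers joint firm profits---so $M$ cannot be jointly firm-optimal, contradicting item~4 of the TFNE definition.

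The only real difference is organizational. The paper first nails down $q_A=p_A$ and $T\cap[0,\mu(p_A)]=\emptyset$, and then argues directly that the sharing set $T$ of $M$ must contain a profit-losing segment that firms would drop. You instead construct an explicit benchmark $M^{**}=(M^*_B\cap C,r^{**})$ with price $p_A$, prove it is the (essentially unique) profit-maximizer among feasible Pareto-improving mechanisms, and then show $M\neq M^{**}$ on a set of positive measure via the welfare comparison. These are two framings of the same idea. Your treatment of the case $q_A<p_A$ is brief, but so is the paper's; in both, the key point is that $p_A$ maximizes firm $A$'s uniform-price profit given $T\cap[0,\mu(p_A))=\emptyset$, so either $q_A$ fails to be an equilibrium price for $A$ or the switch to $p_A$ (weakly) raises joint profits without disturbing the sharing-side comparison.
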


That is, if we assume consumers make their opt-in decisions in a way that leads to a weak improvement for each, then they can do no better than the opt-in strategy of Proposition~\ref{prop:opt-in}.

\begin{proof}
Denote by $T\subseteq C$ the set of consumers shared by firm $B$ under mechanism $M$, namely $M=(T,r)$ for some $r$. 
Then it must be the case that, under $M$, consumers $\theta\in T(q_A)=\left([0,1]\setminus T\right)\cap [0,\mu(q_A)]$ pay $A$'s uniform price, whereas all the others pay one of the firms' personalized prices. 

We now show that $q_A=p_A$, and $T\cap [0,\mu(p_A)]=\emptyset$.
First, observe that there is no consumer $\theta\in[0,\mu(p_a)]\cap T$. If there were such a consumer, then that consumer would be paying $A$'s personalized price $p_A(\theta)>p_A$, and so the mechanism $M$ would not be Pareto improving.
Next, observe that if $q_A > p_A$, then consumers $\theta\in[0,\mu(p_A)]$ would be paying price $q_A>p_A$, again contradicting the assumption that $M$ is Pareto improving.

 Finally, if $q_A<p_A$, then mechanism $M'=(T,r)$ with uniform price $p_A$ rather than $q_A$ would be strictly better for the firms: It would lead to consumers $[0,\mu(p_A)]$ buying from $A$ at price $p_A$, rather than consumers $[0,\mu(q_A)]\setminus T$ buying from $A$ at price $q_A$ (and all others buying at personalized prices, either from $A$ or $B$). The former is better for firm $A$, since $p_A$ is the equilibrium uniform price under no sharing and $q_A$ is not. That is, uniform price $p_A$ is optimal for firm $A$ given that it has no data, which implies that $p_A F(\mu(p_A))\geq q_A F(\mu(q_A))$. Under $M$, firm $A$'s utility from uniform price $q_A$ is even lower, as it does not sell at uniform price to all consumers $\theta\in[0,\mu(q_A)]$ but rather only to consumers $\theta\in[0,\mu(q_A)]\setminus T$.

 Thus, $q_A=p_A$, and $T\cap [0,\mu(p_A)]=\emptyset$. All consumers $\theta\in(\mu(p_A),1]$ will pay a personalized price, either that of $A$ (for consumers $\theta\in (\mu(p_A),1/2]\cap T$ or that of $B$ (for all other consumers). Note that all such consumers would prefer to have their data shared, but the firms will only share data about consumers $\theta\in\left(\mu(p_A),\frac{1}{2}\left(\mu(p_A)+\frac{1}{2}\right)\right]$, by Lemma~\ref{lem:direct-effects}. Finally, if $M$ leads to strictly higher consumer welfare than $M^*$, then it must be the case that a positive measure of consumers $\theta\in \left(\frac{1}{4}+\frac{\mu(p_A)}{2}\right]$ buy from $A$ at personalized prices $p_A(\theta)$. However, in order for this to occur, these consumers must have their data shared by $B$. But since firms choose a joint-profit-maximizing mechanism, firm $B$ will not share the data of such consumers: sharing their data will not affect $A$'s uniform price, and will only decrease total firm profits, by Lemma~\ref{lem:direct-effects}. Thus, mechanism $M$ will not be chosen by the firms in any TFNE, as claimed.
\end{proof}

\section{Conclusion}\label{sec:conclusion}
In this paper we analyzed the benefits to a data-holder of selling consumer data to a data-buyer in a Hotelling  model of imperfect competition. We identified the two effects of data sharing, and showed that the interplay of these effects can lead to Pareto-improving mechanisms that benefit consumers as well as firms. Finally, we showed that consumer opt-in can induce firms to choose such a Pareto-improving mechanism.

\bibliographystyle{ims}
\bibliography{hotellingDS}

\begin{thebibliography}{34}
\expandafter\ifx\csname natexlab\endcsname\relax\def\natexlab#1{#1}\fi
\expandafter\ifx\csname url\endcsname\relax
  \def\url#1{\texttt{#1}}\fi
\expandafter\ifx\csname urlprefix\endcsname\relax\def\urlprefix{URL }\fi
\providecommand{\eprint}[2][]{\url{#2}}

\bibitem[{Admati and Pfleiderer(1986)}]{admati1986monopolistic}
\textsc{Admati, A.~R.} and \textsc{Pfleiderer, P.} (1986).
\newblock A monopolistic market for information.
\newblock \textit{Journal of Economic Theory}, \textbf{39} 400--438.

\bibitem[{Admati and Pfleiderer(1988)}]{admati1988selling}
\textsc{Admati, A.~R.} and \textsc{Pfleiderer, P.} (1988).
\newblock Selling and trading on information in financial markets.
\newblock \textit{The American Economic Review}, \textbf{78} 96--103.

\bibitem[{Ali et~al.(forthcoming)Ali, Lewis and Vasserman}]{ali2020voluntary}
\textsc{Ali, S.~N.}, \textsc{Lewis, G.} and \textsc{Vasserman, S.}
  (forthcoming).
\newblock Voluntary disclosure and personalized pricing.
\newblock \textit{Review of Economic Studies}.

\bibitem[{Belleflamme and Peitz(2015)}]{belleflamme2015industrial}
\textsc{Belleflamme, P.} and \textsc{Peitz, M.} (2015).
\newblock \textit{Industrial organization: markets and strategies}.
\newblock Cambridge University Press.

\bibitem[{Bergemann and Bonatti(2019)}]{bergemann2019markets}
\textsc{Bergemann, D.} and \textsc{Bonatti, A.} (2019).
\newblock Markets for information: An introduction.
\newblock \textit{Annual Review of Economics}, \textbf{11} 85--107.

\bibitem[{Bergemann et~al.(2018)Bergemann, Bonatti and
  Smolin}]{bergemann2018design}
\textsc{Bergemann, D.}, \textsc{Bonatti, A.} and \textsc{Smolin, A.} (2018).
\newblock The design and price of information.
\newblock \textit{American Economic Review}, \textbf{108} 1--48.

\bibitem[{Braulin(2021)}]{clavora2021effects}
\textsc{Braulin, F.~C.} (2021).
\newblock The effects of personal information on competition: Consumer privacy
  and partial price discrimination.
\newblock \textit{ZEW-Centre for European Economic Research Discussion Paper}.

\bibitem[{Chen et~al.(2020)Chen, Choe and Matsushima}]{chen2020competitive}
\textsc{Chen, Z.}, \textsc{Choe, C.} and \textsc{Matsushima, N.} (2020).
\newblock Competitive personalized pricing.
\newblock \textit{Management Science}, \textbf{66} 4003--4023.

\bibitem[{Choe et~al.(2018)Choe, King and Matsushima}]{choe2018pricing}
\textsc{Choe, C.}, \textsc{King, S.} and \textsc{Matsushima, N.} (2018).
\newblock Pricing with cookies: Behavior-based price discrimination and spatial
  competition.
\newblock \textit{Management Science}, \textbf{64} 5669--5687.

\bibitem[{Choudhary et~al.(2005)Choudhary, Ghose, Mukhopadhyay and
  Rajan}]{choudhary2005personalized}
\textsc{Choudhary, V.}, \textsc{Ghose, A.}, \textsc{Mukhopadhyay, T.} and
  \textsc{Rajan, U.} (2005).
\newblock Personalized pricing and quality differentiation.
\newblock \textit{Management Science}, \textbf{51} 1120--1130.

\bibitem[{Clarke(1983)}]{clarke1983collusion}
\textsc{Clarke, R.~N.} (1983).
\newblock Collusion and the incentives for information sharing.
\newblock \textit{The Bell Journal of Economics} 383--394.

\bibitem[{Elliott et~al.(2021)Elliott, Galeotti, Koh and
  Li}]{elliott2021market}
\textsc{Elliott, M.}, \textsc{Galeotti, A.}, \textsc{Koh, A.} and \textsc{Li,
  W.} (2021).
\newblock Market segmentation through information.
\newblock \textit{Available at SSRN 3432315}.

\bibitem[{{European Commission}(2021)}]{EU2021}
\textsc{{European Commission}} (2021).
\newblock Antitrust: Commission opens investigation into possible
  anticompetitive conduct by {Google} in the online advertising technology
  sector.
\newblock
  \urlprefix\url{https://ec.europa.eu/commission/presscorner/detail/en/ip_21_3143}.

\bibitem[{Fudenberg and Villas-Boas(2012)}]{fudenberg2012digital}
\textsc{Fudenberg, D.} and \textsc{Villas-Boas, J.~M.} (2012).
\newblock Price discrimination in the digital economy.
\newblock \textit{The Oxford handbook of the digital economy} 254--272.

\bibitem[{Gehrig and Stenbacka(2007)}]{gehrig2007information}
\textsc{Gehrig, T.} and \textsc{Stenbacka, R.} (2007).
\newblock Information sharing and lending market competition with switching
  costs and poaching.
\newblock \textit{European Economic Review}, \textbf{51} 77--99.

\bibitem[{Gradwohl et~al.(2013)Gradwohl, Livne and
  Rosen}]{gradwohl2013sequential}
\textsc{Gradwohl, R.}, \textsc{Livne, N.} and \textsc{Rosen, A.} (2013).
\newblock Sequential rationality in cryptographic protocols.
\newblock \textit{ACM Transactions on Economics and Computation (TEAC)},
  \textbf{1} 1--38.

\bibitem[{Gradwohl and
  Tennenholtz(2022{\natexlab{a}})}]{gradwohl2022coopetition}
\textsc{Gradwohl, R.} and \textsc{Tennenholtz, M.} (2022{\natexlab{a}}).
\newblock Coopetition against an amazon.
\newblock In \textit{International Symposium on Algorithmic Game Theory}.
  Springer, 347--365.

\bibitem[{Gradwohl and Tennenholtz(2022{\natexlab{b}})}]{gradwohl2022pareto}
\textsc{Gradwohl, R.} and \textsc{Tennenholtz, M.} (2022{\natexlab{b}}).
\newblock Pareto-improving data-sharing.
\newblock \textit{{\em In} Proceedings of the 2022 ACM Conference on Fairness,
  Accountability, and Transparency}.

\bibitem[{Gu et~al.(2019)Gu, Madio and Reggiani}]{gu2019data}
\textsc{Gu, Y.}, \textsc{Madio, L.} and \textsc{Reggiani, C.} (2019).
\newblock Data brokers co-opetition.
\newblock \textit{Available at SSRN 3308384}.

\bibitem[{Ha and Tong(2008)}]{ha2008contracting}
\textsc{Ha, A.~Y.} and \textsc{Tong, S.} (2008).
\newblock Contracting and information sharing under supply chain competition.
\newblock \textit{Management Science}, \textbf{54} 701--715.

\bibitem[{Hotelling(1929)}]{hotelling1929stability}
\textsc{Hotelling, H.} (1929).
\newblock Stability in competition.
\newblock \textit{The Economic Journal}, \textbf{39} 41--57.

\bibitem[{Ichihashi(2020)}]{ichihashi2020competing}
\textsc{Ichihashi, S.} (2020).
\newblock Competing data intermediaries.
\newblock \textit{Manuscript, available at
  \url{https://shota2.github.io/research/data.pdf}}.

\bibitem[{Jappelli and Pagano(2002)}]{jappelli2002information}
\textsc{Jappelli, T.} and \textsc{Pagano, M.} (2002).
\newblock Information sharing, lending and defaults: Cross-country evidence.
\newblock \textit{Journal of Banking \& Finance}, \textbf{26} 2017--2045.

\bibitem[{Jentzsch et~al.(2013)Jentzsch, Sapi and
  Suleymanova}]{jentzsch2013targeted}
\textsc{Jentzsch, N.}, \textsc{Sapi, G.} and \textsc{Suleymanova, I.} (2013).
\newblock Targeted pricing and customer data sharing among rivals.
\newblock \textit{International Journal of Industrial Organization},
  \textbf{31} 131--144.

\bibitem[{Macmillan and Selden(2008)}]{macmillan2008incumbent}
\textsc{Macmillan, I.} and \textsc{Selden, L.} (2008).
\newblock The incumbent's advantage.
\newblock \textit{Harvard Business Review}, \textbf{86} 111--121.

\bibitem[{Montes et~al.(2019)Montes, Sand-Zantman and
  Valletti}]{montes2019value}
\textsc{Montes, R.}, \textsc{Sand-Zantman, W.} and \textsc{Valletti, T.}
  (2019).
\newblock The value of personal information in online markets with endogenous
  privacy.
\newblock \textit{Management Science}, \textbf{65} 1342--1362.

\bibitem[{Pagano and Jappelli(1993)}]{pagano1993information}
\textsc{Pagano, M.} and \textsc{Jappelli, T.} (1993).
\newblock Information sharing in credit markets.
\newblock \textit{The Journal of Finance}, \textbf{48} 1693--1718.

\bibitem[{Raith(1996)}]{raith1996general}
\textsc{Raith, M.} (1996).
\newblock A general model of information sharing in oligopoly.
\newblock \textit{Journal of economic theory}, \textbf{71} 260--288.

\bibitem[{Segura-Rodriguez(2021)}]{segura2021selling}
\textsc{Segura-Rodriguez, C.} (2021).
\newblock Selling data.

\bibitem[{Shamir and Shin(2016)}]{shamir2016public}
\textsc{Shamir, N.} and \textsc{Shin, H.} (2016).
\newblock Public forecast information sharing in a market with competing supply
  chains.
\newblock \textit{Management Science}, \textbf{62} 2994--3022.

\bibitem[{Stole(2007)}]{stole2007price}
\textsc{Stole, L.~A.} (2007).
\newblock Price discrimination and competition.
\newblock \textit{Handbook of industrial organization}, \textbf{3} 2221--2299.

\bibitem[{Taylor and Wagman(2014)}]{taylor2014consumer}
\textsc{Taylor, C.} and \textsc{Wagman, L.} (2014).
\newblock Consumer privacy in oligopolistic markets: Winners, losers, and
  welfare.
\newblock \textit{International Journal of Industrial Organization},
  \textbf{34} 80--84.

\bibitem[{Thisse and Vives(1988)}]{thisse1988strategic}
\textsc{Thisse, J.-F.} and \textsc{Vives, X.} (1988).
\newblock On the strategic choice of spatial price policy.
\newblock \textit{The American Economic Review} 122--137.

\bibitem[{Yang(2022)}]{yang2022selling}
\textsc{Yang, K.~H.} (2022).
\newblock Selling consumer data for profit: Optimal market-segmentation design
  and its consequences.
\newblock \textit{American Economic Review}, \textbf{112} 1364--93.

\end{thebibliography}
\newpage
\appendix
\begin{center}
\begin{Large}
\textbf{Appendix}
\end{Large}
\end{center}

\section{Additional Results on Firm-Optimal Data-Sharing}\label{apx:firm-optimal}

\begin{proposition}\label{prop:firm-optimal-3}
Under uniform $F$, mechanism $M=([\eps, 1/2], 0)$ for small $\eps>0$  leads to strictly higher profits for both firms and is IR relative to any no-sharing equilibrium, in every equilibrium.
\end{proposition}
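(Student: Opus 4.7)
The plan is to show that the game induced by mechanism $M=([\eps,1/2],0)$ under uniform $F$ has a unique subgame-perfect equilibrium, and that in it both firms earn strictly more than the baseline profits $\pi_A=t/8$ and $\pi_B=9t/16$ from the unique no-sharing equilibrium of Example~\ref{ex:uniform-no-sharing}. The distinguishing feature of $M$ relative to the firm-optimal mechanism $([0,1/2],0)$ of Proposition~\ref{prop:firm-optimal} is the small uncovered slice $[0,\eps)$: its presence gives firm $A$ a strict preference over uniform prices, which breaks the multiplicity of equilibria present in $([0,1/2],0)$.

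First, I would fix a candidate uniform price $p_A$ and derive who pays what. Using Bertrand-style reasoning as in Proposition~\ref{prop:firm-optimal} and Lemma~\ref{lem:direct-effects}: for $\theta\in[\eps,1/2]$ both firms have data, so $p_A(\theta)=t(1-2\theta)$ and $A$ wins; for $\theta\in(1/2,1]$, $B$ undercuts via $p_B(\theta)=p_A+t(2\theta-1)$ and wins; and for $\theta\in[0,\eps)$, $B$ can undercut only when $\theta\geq\mu(p_A)$, so the mass of consumers paying $A$'s uniform price is $\min\{\eps,\mu(p_A)\}$. Hence $A$'s profit splits into a fixed personalized component $\int_\eps^{1/2}t(1-2\theta)\,d\theta=t(\tfrac14-\eps+\eps^2)$ plus the uniform component $p_A\cdot\min\{\eps,\mu(p_A)\}$.

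Next, I would solve for the optimal $p_A$ by case analysis on the uniform component. For $p_A\leq t(1-2\eps)$ it equals $p_A\eps$, which is strictly increasing; for $p_A>t(1-2\eps)$ it equals $p_A(\tfrac12-p_A/(2t))$, a parabola peaking at $p_A=t/2$, which lies below $t(1-2\eps)$ whenever $\eps<\tfrac14$, so the revenue is strictly decreasing on this interval. The unique optimum is $p_A=t(1-2\eps)$, giving $\pi_A = t\eps(1-2\eps)+t(\tfrac14-\eps+\eps^2) = t/4 - t\eps^2$ and $\pi_B = \int_{1/2}^1 2t(\theta-\eps)\,d\theta = 3t/4 - t\eps$. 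The gains over the baseline are $t(\tfrac18-\eps^2)$ and $t(\tfrac{3}{16}-\eps)$, both strictly positive for $\eps<3/16$, which immediately gives IR at $r=0$.

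The main obstacle is the uniqueness argument for $p_A$: unlike in $([0,1/2],0)$, where $A$'s uniform price applies to no consumer and is pinned down only indirectly via $B$'s profit, here the slice $[0,\eps)$ makes $A$ strictly prefer $p_A=t(1-2\eps)$, and I must carefully handle all candidate deviations (including very high $p_A$ with $\mu(p_A)\leq 0$, which give zero uniform revenue and are strictly worse). Once uniqueness is established, the profit calculation and IR check reduce to routine integration.
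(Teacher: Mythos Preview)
Your proposal is correct and follows essentially the same approach as the paper's proof, which is packaged as Lemma~\ref{lem:one-eps}: derive the personalized prices, show that firm $A$'s uniform-price optimization on the slice $[0,\eps)$ yields the corner solution $p_A=t(1-2\eps)$, and then compute $\pi_A=t(1/4-\eps^2)$ and $\pi_B=t(3/4-\eps)$ to compare with the no-sharing baseline. Your two-case analysis of the uniform component ($p_A\lessgtr t(1-2\eps)$) is in fact more explicit about uniqueness than the paper's terser constrained-maximization argument, but the substance is identical.
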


The proof follows from the following lemma:
\begin{lemma}\label{lem:one-eps}
Consider mechanism $M=([\eps,1/2],0)$, where $\eps\in(0,1/4]$. Then $A$'s uniform price is $p_A=t(1-2\eps)$,  $A$'s personalized prices for consumers $\theta\in [\eps,1/2]$ are
$p_A(\theta) = t(1-2\theta)$, and  $B$'s personalized prices are
$$p_B(\theta) = \begin{cases} t(2\theta-2\eps)&\mbox{if } \theta\in[1/2,1] \\
t(2\theta-1)&\mbox{if } \theta\in[\eps,1/2]\\
0 & \mbox{otherwise.} \end{cases}$$
Profits are $\pi_A=t(1/4-\eps^2)$ and $\pi_B=t(3/4-\eps)$, and consumer welfare is $CW = v-t(5/4-\eps-\eps^2)$.
\end{lemma}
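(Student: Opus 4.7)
The plan is to analyze the three-stage subgame by backward induction: first derive personalized prices as a function of $A$'s uniform price $p_A$, then optimize $p_A$, and finally compute profits and welfare. For the personalized pricing stage, note that $B$ knows every consumer's location while $A$ knows only locations in $[\eps,1/2]$, so consumers split into three segments. For $\theta\in[\eps,1/2]$ both firms personalize; by the Bertrand-style reasoning of Lemma~\ref{lem:direct-effects} and since $A$ is the closer firm, $A$ wins with $p_A(\theta)=t(1-2\theta)$, while $p_B(\theta)$ is pinned down only up to the requirement that the consumer weakly prefers $A$'s offer (giving the indifference value $t(2\theta-1)$ as the minimum consistent with equilibrium). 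For $\theta\in[0,\eps)\cup(1/2,1]$ only $B$ personalizes against $A$'s uniform price, and the best response is $p_B(\theta)=\max\{0,p_A+t(2\theta-1)\}$.

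Next, the plan is to optimize $A$'s uniform price. The crucial observation is that $A$'s uniform price only generates revenue from those consumers in $[0,\eps)\cup(1/2,1]$ who do not end up buying from $B$. Since $B$ undercuts $A$ for every $\theta\in(1/2,1]$ (as in Lemma~\ref{lem:direct-effects}), only consumers in $[0,\eps)$ can potentially purchase from $A$ at the uniform price. A utility comparison shows $\theta\in[0,\eps)$ buys from $A$ iff $p_A\leq t(1-2\theta)$, i.e., iff $\theta\leq\mu(p_A)$, so $A$'s uniform-price revenue is $p_A\cdot\min\{\eps,\mu(p_A)\}$ (for $p_A\leq t$). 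A case split on $p_A$ relative to $t(1-2\eps)$ yields two regions: on $p_A\leq t(1-2\eps)$ the profit is $p_A\cdot\eps$, linearly increasing; on $p_A>t(1-2\eps)$ the profit is $p_A\cdot\mu(p_A)$, a downward parabola whose unconstrained maximizer is $p_A=t/2$. Since $\eps\leq 1/4$ implies $t(1-2\eps)\geq t/2$, the parabola is decreasing on the second region, and so both regions are maximized at $p_A=t(1-2\eps)$.

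Substituting $p_A=t(1-2\eps)$ into the personalized-price formulas recovers the stated expressions, and the profits and welfare follow by direct integration: $\pi_A=t(1-2\eps)\eps+\int_\eps^{1/2}t(1-2\theta)\,d\theta=t(1/4-\eps^2)$, $\pi_B=\int_{1/2}^1 t(2\theta-2\eps)\,d\theta=t(3/4-\eps)$, and $CW$ from integrating $v-t\theta-p_A$ on $[0,\eps)$, $v-t\theta-p_A(\theta)$ on $[\eps,1/2]$, and $v-t(1-\theta)-p_B(\theta)$ on $(1/2,1]$, which after simplification gives $v-t(5/4-\eps-\eps^2)$. The main obstacle is the case analysis for the uniform price: the non-obvious part is ruling out a profitable deviation to $p_A>t(1-2\eps)$, which would surrender some of $[0,\eps)$ to $B$ in exchange for a higher price on the remainder. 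The constraint $\eps\leq 1/4$ is what makes this deviation unprofitable, since it places the unconstrained parabolic maximizer $t/2$ below $t(1-2\eps)$.
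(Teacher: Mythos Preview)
Your proposal is correct and follows essentially the same backward-induction route as the paper: derive personalized prices from Lemma~\ref{lem:direct-effects}, optimize $A$'s uniform price over its relevant domain, and then integrate to obtain profits and welfare. Your case split for the uniform price---treating $p_A\le t(1-2\eps)$ and $p_A>t(1-2\eps)$ separately and using $\eps\le 1/4$ to place the parabola's unconstrained peak $t/2$ at or below the boundary---is slightly more explicit than the paper's constrained-optimization phrasing, but the content is the same.
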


As $\eps\rightarrow 0$, firm profits are $\pi_A\rightarrow t/4$ and $\pi_B\rightarrow 3t/4$. These profits
are strictly higher for \textbf{both} firms than under no sharing (where they are $\pi_A=t/8$ and $\pi_B=9t/16$, by Example~\ref{ex:uniform-no-sharing}), and so the mechanism is IR and strictly improving for both firms.
Note, however, that this comes at the expense of consumer
welfare, which now approaches $v-5t/4$ (rather than $v-t$ under no sharing).

\begin{proofof}{Lemma~\ref{lem:one-eps}}
We begin with $A$'s uniform price. That uniform price will apply only to consumers
in $[0, \eps)$: those in $[\eps,1/2)$ will pay $A$'s personalized price, whereas the rest will buy from $B$.
Given this, firm $A$ maximizes its profit $p_A \cdot \mu(p_A)$ on the segment $[0,\eps)$ by 
solving
$$\max_{p_A}p_A\left(\frac{1}{2}-\frac{p_A}{2t}\right)~~\mbox{s.t.}~~\frac{1}{2}-\frac{p_A}{2t}\leq \eps.$$
The optimal solution here is a corner one, with $1/2-p_A/(2t)=\eps$ and so $p_A = t(1-2\eps)$.

For consumers in $[\eps,1/2]$ firm $A$ offers the personalized price $p_A(\theta) = t(1-2\theta)$.
Firm $B$ offers the same personalized prices as with full sharing on these consumers, but offers
price
$p_B(\theta) = t(2\theta-2\eps)$ to consumers in $(1/2,1]$ as their outside option is to buy from $A$ at price
$p_A=t(1-2\eps)$, and $B$'s chosen price leaves them indifferent.

We now calculate profits given these prices. First,
$$\pi_A = \eps\cdot t(1-2\eps) + \int_\eps^{1/2} t(1-2\theta)d\theta = t\left(\frac{1}{4}-\eps^2\right).$$
Next,
$$\pi_B = \int_{1/2}^1 t(2\theta - 2\eps)d\theta = t\left(\frac{3}{4}-\eps\right).$$

Finally, consumer welfare is
\begin{align*}
CW&=\int_0^{\eps}\left(v-t(1-2\eps)-t\theta\right)d\theta+
\int_{\eps}^{1/2}\left(v-t(1-2\theta)-t\theta\right)d\theta+
\int_{1/2}^1\left(v-t\left(2\theta-2\eps\right)-t(1-\theta)\right)d\theta\\
&=v-t\left[\int_0^{\eps}(1-2\eps+\theta)d\theta+
\int_{\eps}^{1/2}(1-\theta)d\theta+
\int_{1/2}^1(1+\theta-2\eps)d\theta\right]\\
&=v-t\left[\eps(1-2\eps)+\frac{\eps^2}{2}+\frac{3}{8}-\eps-\frac{\eps^2}{2}
+\frac{1}{2}\left(\frac{3}{2}-2\eps\right)+\frac{1}{8}\right]\\
&= v- t\left(\frac{5}{4}-\eps-\eps^2\right).
\end{align*}
\end{proofof}

\end{document}